\DeclareMathAlphabet\mathbfcal{OMS}{cmsy}{b}{n}
\xdef\csname bf\x \endcsname{\noexpand\ensuremath{\noexpand\mathbf{\x}}} 
\xdef\csname bf\x \endcsname{\noexpand\ensuremath{\noexpand\mathbf{\x}}} 
\xdef\csname ds\x \endcsname{\noexpand\ensuremath{\noexpand\mathds{\x}}} 
\xdef\csname cal\x \endcsname{\noexpand\ensuremath{\noexpand\mathcal{\x}}} 
\xdef\csname bcal\x \endcsname{\noexpand\ensuremath{\noexpand\mathbfcal{\x}}} 
\newcommand{\vUn}{\mathds{1}} 
\newcommand{\out}{\mathop{\otimes}} 
\newcommand{\khatri}{\odot} 
\newcommand{\hadam}{\boldsymbol{\ast}} 
\newcommand{\T}{{\sf T}} 
\DeclareMathOperator*{\trace}{tr} 
\DeclareMathOperator*{\argmin}{\mathrm{argmin}} 
\def\p{{\mathrm{p}}} 
\def\dist{{\mathrm{D}}} 
\def\LW{{\mathrm{LW}}} 
\def\Diag{{\mathrm{Diag}}} 
\DeclareMathOperator*{\vectorize}{vec} 
\DeclareMathOperator*{\Card}{Card} 
\def\Error{{\mathrm{Err}}} 
\newcommand{\Err}[1]{\Error_{#1}} 
\newcommand{\triples}[3]{\left\{#1,#2,#3\right\}} 
\newcommand{\jkl}{\triples{j}{k}{\ell}} 
\newcommand{\Z}[1]{Z^{(#1)}} 
\newcommand{\im}[1]{i_{#1}} 
\newcommand{\Am}[1]{\bfA^{(#1)}} 
\newcommand{\Amh}[1]{\widehat{\bfA}^{(#1)}} 
\newcommand{\am}[2]{\bfa^{(#1)}_{#2}} 
\newcommand{\amh}[2]{\widehat{\bfa}^{(#1)}_{#2}} 
\newcommand{\amr}[1]{\am{#1}{r}} 
\newcommand{\amrh}[1]{\amh{#1}{r}} 
\newcommand{\lbd}{\bm{\lambda}} 
\newcommand{\lbdh}{\widehat{\bm{\lambda}}} 
\newcommand{\cpdM}{\cpdsetp{\lbd; \Am{1},\ldots, \Am{M}}} 
\newcommand{\cpdT}{\cpdsetp{\lbd; \Am{j},\Am{k}, \Am{\ell}}} 
\newcommand{\cpdsetp}[1]{\left[\!\left[#1\right]\!\right]} 
\newcommand{\Htrip}[3]{\bcalH^{(#1#2#3)}}
\newcommand{\Htript}[3]{\widetilde{\bcalH}^{(#1#2#3)}}
\newcommand{\Hjkl}{\Htrip{j}{k}\ell}
\newcommand{\Hjklt}{\widetilde{\bcalH}^{(jk\ell)}}
\newtheorem{definition}{Definition}[section]
\newtheorem{lemma}[definition]{Lemma}
\newtheorem{remark}[definition]{Remark}
\newtheorem{example}[definition]{Example}
\journal{Signal Processing}
\begin{document}

\begin{frontmatter}

\title{Coupled tensor models for probability mass function estimation: Part I, Principles and algorithms.}

\author{Philippe FLORES$^{a,1}$, Konstantin USEVICH$^{2}$, David BRIE$^{2}$}

\affiliation{organization={Corresponding author -- $^{1}$CNRS, Universit\'e Grenoble Alpes, Grenoble INP, GIPSA-lab ; 11 rue des Mathématiques, 38402 Saint-Martin-d'Hères, France -- $^{2}$CNRS, Universit\'e de Lorraine, CRAN ; Campus Sciences BP 70239, Vandœuvre-lès-Nancy, France}} 

\begin{abstract} 
    In this article, a Probability Mass Function (PMF) estimation method which tames the curse of dimensionality is proposed.
    This method, called Partial Coupled Tensor Factorization of 3D marginals or PCTF3D, has for principle to partially couple order-3 data projections -- seen as order-3 tensors -- to obtain a tensor decomposition of the probability mass tensor.
    The novelty of PCTF3D relies on partial coupling which consists in choosing a subset of 3D marginals.
    The choice of marginals is then formulated with hypergraphs.
    After presenting possible coupling strategies, some numerical experiments and an application of the method are proposed.
    This article is the first of a two-part article.
    While this first article focuses on a new algorithmic framework for PMF estimation, the second studies uniqueness properties of the model introduced in this article.
\end{abstract}

\begin{keyword}
    Probability mass functions \sep curse of dimensionality \sep coupled tensor models \sep hypergraphs.
\end{keyword}

\end{frontmatter}

\section{Introduction} \label{sec:intro}

The problem of probability mass function (PMF) estimation arises in numerous signal processing, machine learning and data analytics.
A first example is the supervised learning problem that aims at grouping labeled data samples into classes.
Knowing the joint PMF of the random variables and associated labels is often seen as the "gold standard" that permits to affect samples to a class thanks to the Maximum a posteriori principle \cite{duda_pattern_1973,van_trees_detection_2004}.
A second example is in recommender systems which aims at recommending items to users according to the predicted rating.
Again knowing the PMF makes this task readily implementable.

A naive approach to estimate the PMF consists in counting the number of occurrences of the random variable realizations.
This method is useful for low-dimensional problems (roughly when the number of variables is $\leq 3$) but it hardly scales to higher dimensions because of the curse of dimensionality, which states that the complexity of a problem increases exponentially with the number of dimensions.
In fact, when no additional information on the variables, such as tree or Markovian structures, is available, estimating the PMF is often considered as unfeasible.
A very promising approach was recently proposed by \cite{n_kargas_tensors_2018} in which the high dimensional PMF is modeled as a naive Bayes model (NBM) resulting in a constrained CPD model whose parameters are estimated by coupling order-3 (or 4) marginals - seen as order 3 or 4 tensors.
The point is that these low-order marginals can be estimated with a limited number of samples.
This approach falls into the general framework of coupled tensor decompositions which has proved to be effective in a number of applications such as hyperspectral super-resolution \cite{kanatsoulis_hyperspectral_2018, prevost_hyperspectral_2022}, multitask fMRI analysis \cite{borsoi:hal-04135339}, data mining and data fusion \cite{ermics2015link, papalexakis2016tensors}.
In short, coupled tensor decompositions allow modeling multidimensional data with low complexity and enjoy provably theoretical uniqueness guarantees under mild conditions.
This holds true for PMF estimation by coupled tensor factorizations of order-3 marginals.
In \cite{ibrahim_recovering_2021}, the coupling of 2D marginals -- seen as matrices -- to obtain an estimation of the higher-order PMF is addressed.
While this approach is very appealing since the number of samples required to estimate 2D marginals can be reduced (as compared to 3D marginals), it appears that, similarly to non-negative matrix factorization, uniqueness guarantees require additional constraints to be fulfilled.
In \cite{a_yeredor_maximum_2019}, an alternative approach is proposed.
It is also based on the naive Bayes model, but it relies on a Kullback-Leibler Divergence optimization approach and does not require estimating order-3 marginals.
The factors of the naive Bayes model are estimated directly from the data leading to a computational burden which hardly scales to large amounts of data.
This approach was extended by \cite{kargas2019learning} to smooth probability density function (PDF) estimation using a two-steps approach : jointly factorize histogram estimates (\emph{i.e.} PMFs) of lower-dimensional PMFs and interpolate the factors to recover the PDF.
A very interesting point is that this approach has proved to outperform by far the classical EM for Gaussian mixture models \cite{reynolds_gaussian_2009,hsu_learning_2013}.
Our conviction is that this approach may also be relevant in other data analytics and signal processing applications where PDF estimation is central such as sensor networks \cite{nowak_distributed_2003}, biology \cite{gyllenberg_non-uniqueness_1994} and social network studies \cite{nowicki_estimation_2001}. 

In this paper we will restrict ourselves to the PMF estimation problem, keeping in mind that it can be straightforwardly extended to the PDF estimation problem following \cite{kargas2019learning}.
A key aspect of the tensor-based PMF estimation relies on the ability of the naive Bayes model to break the curse of dimensionality since its complexity remains linear with the number of dimensions.
However, the computational burden of these methods may limit their practical use.
This is the case for \cite{a_yeredor_maximum_2019} whose computational complexity is directly affected by the number of dimensions and the number of samples.
Working with 3D-marginals as \cite{n_kargas_tensors_2018} may yield a lower computational burden, but it remains strongly impacted by the number of dimensions, noted $M$, since the number of 3D marginals is $\binom{M}3$, hence increases cubically with $M$.
For example, if one wants to estimate an order-($M=20$) PMF with $I=15$ bins per dimension, one has to solve an optimization problem coupling $1140$ marginals of size $15^3 = 3375$.

In this paper, the Partially Coupled Tensor Factorization of 3D marginals or PCTF3D is proposed.
The idea is very simple : instead of coupling all the 3D-marginals, only a limited number of marginals are used to estimate the NBM.
By choosing the number of marginals to be coupled, it is possible to control the computational burden of the method.
This first part of the paper focuses on the model and associated estimation algorithms while the derivation of theoretical uniqueness guarantees is postponed to the Part II article \cite{pctf3d_part2}.
Section \ref{sec:preliminary} introduces basic notions on tensors and the Canonical Polyadic Decomposition (CPD), the probability mass tensor and the link between the naive Bayes model and CPD.
Section \ref{sec:principles} presents the principles of the proposed PCTF3D together with the corresponding optimization problem and the algorithm proposed to solve it.
In Section \ref{sec:choosingT}, the core concept of PCTF3D -- the problem of choosing the 3D marginals to be coupled -- is examined through the hypergraph framework.
Section \ref{sec:numExp} features numerical experiments that provide insights on the performance of PCTF3D, especially regarding the coupling strategies presented in Section \ref{sec:choosingT}.
Finally, Section \ref{sec:appli_cyto} presents an application of PCTF3D to the problem of flow cytometry data clustering for which the PCTF3D's model proved to be very effective.

\subsection*{Notations}
Scalars are denoted as both upper and lowercase letters ($a$ or $A$).
Vectors are denoted as lowercase bold letters whereas matrices are uppercase bold letters ($\bfa$).
High-order tensors are denoted as calligraphic bold letters ($\bcalA$).
The set of integers $\{1,\ldots, M\}$ is denoted $\cpdsetp{1,M}$ in the following.
The notation $\vUn_I$ refers to a column-vector of size $I$ whose entries are all ones.
$\vectorize(.)$ denotes the column-major vectorization operator for either a tensor or a matrix.
The operator $.^\T$ designates the transposition of either a vector or a matrix.
The identity matrix of size $R$ will be denoted as $\bfI_R$.
For a vector $\bfd$ of size $I$, $\Diag(\bfd)$ denotes the diagonal matrix of size $I\times I$ whose diagonal is the vector $\bfd$.
For an order-$M$ tensor $\bcalX$ of size $I\times \cdots\times I$, the matricization on the $m$-th mode is denoted as $\bcalX_{(m)}$ and is a matrix of size $I\times I^{M-1}$.
This matrix is storing the entries of $\bcalX$ in the order of \cite{kolda_tensor_2009}.
For two matrices $\bfA\in \dsR^{I\times R}$ and $\bfB\in\dsR^{J\times R}$, the Khatri-Rao product $\bfA\khatri\bfB$ is a matrix of size $IJ\times R$ defined as the Kronecker product of columns of $\bfA$ and $\bfB$ (see \cite{kolda_tensor_2009} for more details).
The Frobenius norm is denoted as $\|.\|_F$.

\section{Preliminaries} \label{sec:preliminary}

\subsection{Tensors and canonical polyadic decomposition}
An order-$M$ tensor is an $M$-way array whose entries are defined by $M$ indices \cite{comon_tensors_2014}.
Note that a matrix, a vector and a scalar are respectively tensors of order 2, 1 and 0.
For $M\geq3$, tensors are also denoted as \emph{higher-order} tensors.
For conciseness, the term \emph{tensor} will strictly refer to a higher-order tensor in the following.
In the following, only \emph{cubic} tensors are considered, meaning that all tensor dimensions are equal to $I$.
Similarly to matrices, tensors represent multilinear operators \cite{comon_tensors_2014}.
We recall below some key definitions on tensors and their ranks.
\begin{definition} {\textbf{Rank-one tensor} -- }
    An order-$M$ tensor $\bcalX\in \dsR^{I\times \cdots \times I}$ is said \emph{rank-one} if it can be written as the outer product $\out$ of $M$ vectors $\left\{ \bfa^{(m)} \in\dsR^I \right\}_{m=1}^M$:
    \begin{equation*}
        \bcalX = \bfa^{(1)}\out\cdots\out\bfa^{(M)},
    \end{equation*}
    so that entries of $\bcalX$ are given by:
    \begin{equation*}
        x_{i_1\cdots i_M} = \prod\limits_{m=1}^M a^{(m)}_{i_m}, \quad 1\leq i_1 \; , \; \ldots \; , \; i_M \leq I.
    \end{equation*} 
\end{definition}
\begin{definition} {\textbf{Canonical Polyadic Decomposition (CPD)} -- }
    The \emph{Canonical Polyadic Decomposition} (CPD or CP decomposition) of a tensor $\bcalX\in\dsR^{I\times \cdots \times I}$ represents $\bcalX$ as a sum of $R$ rank-one tensors \cite{hitchcock_expression_1927}:
    \begin{equation}
        \label{eq:cpdDef}
        \bcalX = \sum\limits_{r=1}^R \lambda_r \amr{1}\out \cdots\out \amr{M},
    \end{equation}
    where $\amr{m}\in\dsR^{I}$ for $m\in \cpdsetp{1,M}$ and $r\in\cpdsetp{1,R}$ are the \emph{factors} of the decomposition.
    The vector $\lbd\in\dsR^{R}$ is called the \emph{loading vector}.
\end{definition}
The decomposition \eqref{eq:cpdDef} is a natural generalization of low-rank matrix decomposition to tensors, and therefore the value $R$ represents the rank of the decomposition.
Although there exists other notions of ranks for tensors (see \cite{comon_tensors_2014}), the term \emph{rank} will strictly refer to the notion of CP-rank in this article.
Figure \ref{fig:cpd3dex} depicts an example of the CPD in the $3$-dimensional case.

By grouping factors into so-called \emph{factor matrices} $\Am{m} = \begin{bmatrix}
    \am{m}{1} & \cdots & \am{m}{R}
\end{bmatrix} \in \mathbb{R}^{I\times R}$, the CPD of a tensor is compactly written as:
\begin{equation}
    \bcalX = \cpdsetp{\lbd;\;\Am{1}, \; \ldots \; , \Am{M}}.
    \label{eq:cpdMshort} 
\end{equation}
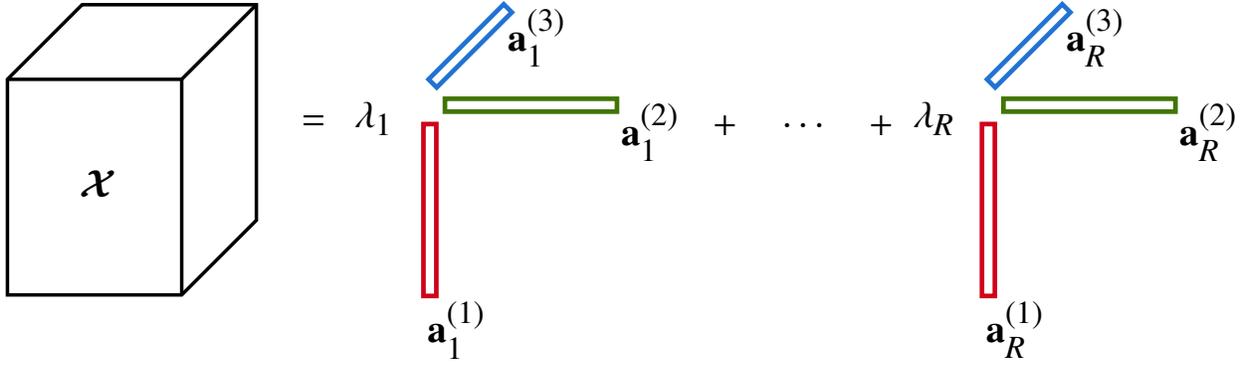
\begin{figure}
    \centerline{\resizebox{\textwidth}{!}{\tikzset{every picture/.style={line width=0.75pt}} 

\begin{tikzpicture}[x=0.75pt,y=0.75pt,yscale=-1,xscale=1]

\draw  [line width=2.25]  (44,157.76) -- (108.19,93.58) -- (257.95,93.58) -- (257.95,279) -- (193.77,343.19) -- (44,343.19) -- cycle ; \draw  [line width=2.25]  (257.95,93.58) -- (193.77,157.76) -- (44,157.76) ; \draw  [line width=2.25]  (193.77,157.76) -- (193.77,343.19) ;
\draw (121.5,247.5) node  [font=\Huge] [align=left] {$\bcalX$};

\draw  [color={rgb, 255:red, 208; green, 2; blue, 27 }  ,draw opacity=1 ][line width=3]  (401.52,196.24) -- (412.6,196.24) -- (412.6,344.08) -- (401.52,344.08) -- cycle ;
\draw (403.52,347.08) node [anchor=north west][inner sep=0.75pt]  [font=\Huge] [align=left] {$\am11$}; 
\draw  [color={rgb, 255:red, 208; green, 2; blue, 27 }  ,draw opacity=1 ][line width=3]  (881.52,196.24) -- (892.6,196.24) -- (892.6,344.08) -- (881.52,344.08) -- cycle ;
\draw (883.52,347.08) node [anchor=north west][inner sep=0.75pt]  [font=\Huge] [align=left] {$\am1R$};

\draw  [color={rgb, 255:red, 65; green, 117; blue, 5 }  ,draw opacity=1 ][line width=3]  (567.83,174.48) -- (567.83,185.56) -- (420,185.56) -- (420,174.48) -- cycle ;
\draw (569.83,177.48) node [anchor=north west][inner sep=0.75pt]  [font=\Huge] [align=left] {$\am21$};
\draw  [color={rgb, 255:red, 65; green, 117; blue, 5 }  ,draw opacity=1 ][line width=3]  (1047.83,174.48) -- (1047.83,185.56) -- (900,185.56) -- (900,174.48) -- cycle ;
\draw (1049.83,177.48) node [anchor=north west][inner sep=0.75pt]  [font=\Huge] [align=left] {$\am2R$};

\draw  [color={rgb, 255:red, 33; green, 115; blue, 211 }  ,draw opacity=1 ][line width=3]  (470.51,93.17) -- (477.49,100.15) -- (412.89,164.74) -- (405.91,157.76) -- cycle ;
\draw (472.51,96.17) node [anchor=north west][inner sep=0.75pt]  [font=\Huge] [align=left] {$\am31$};
\draw  [color={rgb, 255:red, 33; green, 115; blue, 211 }  ,draw opacity=1 ][line width=3]  (950.51,93.17) -- (957.49,100.15) -- (892.89,164.74) -- (885.91,157.76) -- cycle ;
\draw (952.51,96.17) node [anchor=north west][inner sep=0.75pt]  [font=\Huge] [align=left] {$\am3R$};

\draw (341,176) node [anchor=north west][inner sep=0.75pt]  [font=\Huge] [align=left] {$\lambda_1$};
\draw (821,176) node [anchor=north west][inner sep=0.75pt]  [font=\Huge] [align=left] {$\lambda_R$};

\draw (295,187) node [anchor=north west][inner sep=0.75pt]  [font=\Huge] [align=left] {$=$};
\draw (648.57,187) node [anchor=north west][inner sep=0.75pt]  [font=\Huge] [align=left] { $+ \quad \cdots \quad +$};

\end{tikzpicture}}}
    \caption{Example of a CP decomposition for an order-3 tensor.}
    \label{fig:cpd3dex}
\end{figure}

\subsection{Probability mass tensor}
Let $\bfz=\left(\Z{1}, \; \ldots \; ,\Z{M}\right)$ a random vector formed by $M$ random discrete variables, each $\Z{m}$ taking $I$ values from 1 to $I$\footnote{For conciseness, $I$ is considered equal for all dimensions but the method straigthforwardly generalizes to different $I_m$.}.
In this paper, we address the problem of estimating the probability mass function (PMF) of $\bfz$ denoted $\p_\bfz = \p_{\Z{1},\ldots,\Z{M}}$ from a finite set of observations.
Since the random $\bfz$ vector is discrete its PMF can be represented by an $M$-th order tensor.
\begin{definition} {\textbf{Probability mass tensor} -- }
    For a discrete random vector $\bfz$ of size $M$, the probability mass tensor $\bcalH$ of order $M$ and of size $I\times \dots \times I$ is defined as:
    \begin{equation}
        h_{\im{1}\ldots\im{M}} = \p_\bfz(i_1, \;\ldots\;, i_M) = \Pr\left(\Z{1}=\im{1}, \;\ldots\;,\Z{M}=\im{M}\right),
        \label{eq:naiveHisto}
    \end{equation}
    and contains all the information of the PMF of $\bfz$.
\end{definition}
Hence, from an observation matrix $\bfX \in \mathbb{R}^{N\times M}$ whose rows represent $N$ realizations of $\bfz$, estimating the PMF $\p_\bfz$ boils down to the estimation of the tensor $\bcalH$.
However, estimating $I^M$ values from a set of $N$ observations becomes quickly infeasible as the order $M$ grows.
This issue is known as the curse of dimensionality \cite{bellman_adaptive_1959}.

\subsection{Naive Bayes model and tensor decomposition} \label{sec:nbmPMF}
To overcome the curse of dimensionality, \cite{n_kargas_tensors_2018} proposed to model the PMF as a Naive Bayes Model (NBM) which introduces a discrete latent variable $L$ taking values from 1 to $R$ such that elements of $\bfz$ are independent conditionally to $L$:
\begin{equation}
    \label{eq:nbmpdf}
    h_{i_1\ldots i_M} = \sum\limits_{r=1}^R \Pr\left(L=r\right) \prod\limits_{m=1}^M \Pr\left(\Z{m}=i_m\mid L=r\right).
\end{equation}
Assuming an NBM for the PMF reduces the number of unknowns from $I^M$ to $M(I+1)R$ thus maintaining a complexity linear with $M$.

The NBM is a special case of graphical models \cite{jordan_graphical_2004}, which are closely related to tensor decompositions \cite{ishteva_tensors_2015,robeva_duality_2019}.
In the case of NBM, the corresponding decomposition is exactly the CPD:
\begin{align}
    h_{i_1\ldots i_M} & = \sum\limits_{r=1}^R \lambda_r \prod\limits_{m=1}^M a_{ri_m}^{(m)} \label{eq:modelCPDsum} \\
    \bcalH & = \cpdM \label{eq:CPDH},
\end{align}
where factors $\{\amr{m}\}^{R,M}_{r=1,m=1}$ contain conditional probabilities with respect to the latent variable and $\lbd$ stores the probabilities of each latent state: (i) $a^{(m)}_{ri_m}$ represents $\Pr(\Z{m}=i_m \;|\; L=r)$, and (ii) $\lambda_r$ represents $\Pr(L=r)$.
The equivalence between the tensor model \eqref{eq:CPDH} and the probabilistic model \eqref{eq:nbmpdf} results in non-negative and sum-to-one constraints on factor matrices $\Am{m}$ and $\lbd$ of the CPD:
\begin{subequations} \label{eq:simplex}
    \centerline{\begin{minipage}{0.45\linewidth}
        \begin{equation}
            \Am{m}\geq0,\quad \lbd\geq0
            \label{eq:nonneg}
        \end{equation}
    \end{minipage}
    \hfill
    \begin{minipage}{0.45\linewidth}
        \begin{equation}
            \vUn_I^{\T}\Am{m} = \vUn_R^{\T},\quad \vUn_R^{\T}\lbd=1.
            \label{eq:sumtoone}
        \end{equation}
    \end{minipage}}
\end{subequations}

\quad

\noindent In the following, the constraints \eqref{eq:nonneg} and \eqref{eq:sumtoone} will be referred to as \emph{simplex} constraints \eqref{eq:simplex}.

\begin{definition} {\textbf{3D marginal tensor} -- }
    For a subset of variables $(\Z{j},\Z{k},\Z{l})$, $\Hjkl$ denotes the 3D marginal tensor of size $I\times I\times I$ whose entries are defined by:
    \begin{equation}
        h^{(jkl)}_{i_ji_ki_\ell} = \Pr\left(\Z{j}=i_j,\Z{k}=i_k,\Z{\ell}=i_\ell\right).
        \label{eq:nbmJKLmarg}
    \end{equation}
\end{definition} 

An important property of the NBM is that it is robust to marginalization, in the sense that, if $\bcalH$ follows a NBM, any $\Hjkl$ follows an NBM with the same parameters (due to sum-to-one constraints \eqref{eq:sumtoone}):
\begin{equation}
    \bcalH = \cpdM \quad \Longrightarrow \quad \forall \jkl, \; \Hjkl = \cpdT.
    \label{eq:histo:cpdJKL}
\end{equation}
Note that in \eqref{eq:histo:cpdJKL}, the converse is true if the coupled decomposition is identifiable.
This is the topic of the joint article \cite{pctf3d_part2}.

\section{Partially Coupled Tensor Factorization of 3D marginals} \label{sec:principles}

The approach proposed by \cite{n_kargas_tensors_2018} consists in formulating the PMF estimation as a coupled tensor decomposition using all the possible empirical 3D marginals.
This will be referred to as Fully Coupled Tensor Factorization of 3D marginals (FCTF3D).
As said before, as the number of marginal increases cubically, the computational burden may rapidly become prohibitive for high dimensional problems.
This motivated the development of the Partially Coupled Tensor Factorization of 3D marginals (PCTF3D) whose principle is quite simple : instead of coupling all the marginals, it proposed to estimate the PMF using a limited number of marginals.

\subsection{Optimization problem}

Let $\calT$ be a set of triplets that will be called a \emph{coupling} in the following.
The optimization problem of PCTF3D is then the following:
\begin{align}
    & \lbdh, \; \Amh1, \; \; \ldots \; \Amh{M} = \argmin_{\lbd, \; \Am1, \; \ldots \; , \; \Am{M}} \sum\limits_{\jkl\in\calT} \left\| \Htript{j}k\ell - \cpdT \right\|^2_F \label{eq:optimPCTF3D} \\
    \text{s.t.} \quad & \lbd, \{\Am{m}\}^M_{m=1} \text{ follow \eqref{eq:simplex}.}\notag
\end{align}
In fact, the FCTF3D is a special case for which $\calT =\{i < j< k\}$.
Thorough discussions about possible couplings $\calT$ are postponed to \Cref{sec:choosingT} while the identifiability of partially coupled models is fully addressed in \cite{pctf3d_part2}.

\subsection{Algorithm}
Following the Alternating Optimization (AO) approach of \cite{k_huang_flexible_2016}, \eqref{eq:optimPCTF3D} is solved by considering sub-problems that update each factor matrix and loading vector alternatively.
The ($t+1$)-update of the $m$-th factor denoted $\Am{m}_{t+1}$ is obtained by solving the following optimization sub-problem:
\begin{align}
    & \Am{m}_{t+1} = \argmin_{\Am{m}\in\dsR^{I\times R}} \sum\limits_{ \triples{m}k\ell\in\calT} \left\| \Htript{m}k\ell_{(1)} - \Am{m}\Diag(\lbd_t)\left( \Am\ell\khatri\Am{k} \right)^\T \right\|^2_2 \label{eq:subOptimA} \\
    \text{s.t.} \quad & \Am{m} \; \text{follows \eqref{eq:simplex}.} \notag
\end{align}
Here, two remarks have to be made.
First, the update of the $m$-th factor is only based on the 3D marginals where the $m$-th variable appears.
This is what expression $\triples{m}k\ell\in\calT$ means: only the triplets of the form $\triples{m}k\ell$ that are in subset $\calT$ are considered.
Second, the terms $\Am{k}$ and $\Am\ell$ designate the last updates of those factors.
Therefore, $\Am{k}$ is referring to $\Am{k}_t$ if $k>m$ or $\Am{k}_{t+1}$ if $k<m$.
Regarding the ($t+1$)-update of $\lbd$, it is obtained by solving the following optimization problem:
\begin{align}
    & \lbd_{t+1} = \argmin_{\lbd\in\dsR^{R}} \sum\limits_{ \triples{j}k\ell\in\calT} \left\| \; \vectorize(\Htript{j}k\ell) - \left( \Am\ell_{t+1}\khatri\Am{k}_{t+1} \khatri \Am{j}_{t+1} \right)\lbd \;\right\|^2_2 \label{eq:subOptimL} \\
    \text{s.t.} \quad & \lbd \; \text{ follows \eqref{eq:simplex}.} \notag
\end{align}
Solving optimization problems \eqref{eq:subOptimA} and \eqref{eq:subOptimL} is then performed thanks to the classical ADMM \cite{boyd_distributed_2011}.
The update procedures are thoroughly detailed in \ref{app:updates}.
For both sub-problems, the stopping criterion is the same as in \cite[Eq. (10) and (11)]{k_huang_flexible_2016}.
The whole procedure is summarized in Algorithm \ref{alg:PCTF3D}.
\begin{algorithm2e}
    \SetAlgoLined
    \KwIn{$\bfX$, $\calT$, $R$.}
    \KwOut{$\{\Am{m}\}_{m=1}^M$ and $\lbd$.}
    \KwSty{Initialization:$\;$}{$\{\Am{m}\}_{m=1}^M$ and $\lbd\in \dsR^R$ random with simplex constraints \eqref{eq:nonneg} and \eqref{eq:sumtoone}.}
     
    Estimate the empirical marginals $\Hjklt$ for $\jkl\in\calT$.

    \While{Convergence is not achieved}{
        \For{$m \in\cpdsetp{1,M}$}{
            Update of $\Am{m}$ by solving optimization problem \eqref{eq:subOptimA}.
        }
        Update of $\lbd$ by solving optimization problem \eqref{eq:subOptimL}.}
    \caption{PCTF3D (solving \eqref{eq:optimPCTF3D}).}
    \label{alg:PCTF3D}
\end{algorithm2e}
\section{How to choose the marginals to be coupled?} \label{sec:choosingT}
In this section, the problem of choosing the subset of coupled 3D-marginals is presented.
Setting the so-called \emph{partial coupling} comes to the choice of $T$ different triplets from the $\binom{M}{3}$ possible triplets.
In the following, a \emph{coupling strategy} denotes a method to choose a coupling at a fixed given number of triplets $T$.
Note that couplings presented in the following subsections are defined up to a permutation of variables.
Coupling strategies are summarized in \Cref{tab:summaryCouling}.

\begin{table}
    \centerline{\begin{tabular}{ccccc}
        \toprule
        Coupling strategy & Number of triplets & Properties \\ \midrule
        Full & $\binom{M}{3}$ & Corresponds to FCTF3D \cite{n_kargas_tensors_2018} \\ \\
        '+2' & $\left\lfloor\frac{M}2\right\rfloor$ & A least coupled strategy \\ \\
        '+1' & $M$ & A least coupled strategy with constant sequence of degrees \\ \\
        Random & $T$ & Unbalanced sequence of degrees \\ \\ 
        Balanced & $T$ & Step-1 coupling \\
        \bottomrule
    \end{tabular}}
    \caption{Coupled strategies properties and associated complexities.}
    \label{tab:summaryCouling}
\end{table}
Before detailing specifically each coupling strategy, the link between hypergraphs and couplings is presented together with associated key notion \emph{sequence of degrees} from which balanced coupling is derived.

\subsection{Couplings as hypergraphs} \label{sec:couplings}
A coupling $\calT$ is a set of triplets where each triplet can be interpreted as a link between a set of 3 variables.
Thus, coupling $\calT$ is a hypergraph.
Unlike graphs that link variables by pairs, hypergraphs have vertices that can contain more than 2 variables.
\begin{definition} {\textbf{$r$-uniform hypergraph} \cite{frosini_degree_2013} -- }
    A \emph{hypergraph} is a tuple $(\calV, \calE)$ where $\calV$ is a finite set of vertices and $\calE$ is a set of \emph{hyper-edges} (a set of non-empty subsets of $\calV$).
    A hypergraph is called \emph{$r$-uniform} if all elements in $\calE$ have cardinality $r$.
\end{definition}
In the context of PCTF3D, 3-uniform hypergraphs have to be considered, meaning that each triplet $\epsilon\in\calE$ is a set containing 3 different elements of $\calV = \cpdsetp{1,M}$.
In the following, the hypergraph $\calT$ will denote the set of $T$ vertices that defines a 3-uniform hypergraph.
\begin{example}
    For example, with $M=4$ variables, the coupling of \Cref{fig:3hypergraph} is represented by the hypergraph $\calT = \{\triples{1}{2}{3},\triples{2}{3}{4}\}$.
\end{example}
Now, important notions relating to coupling as hypergraph are presented.
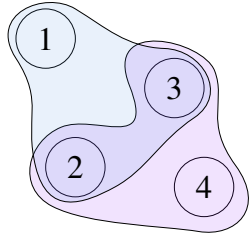
\begin{figure}
    \centerline{\resizebox{!}{3.3cm}{\tikzset{every picture/.style={line width=0.75pt}} 

\begin{tikzpicture}[x=0.75pt,y=0.75pt,yscale=-1,xscale=1]

\draw   (40,70) .. controls (40,53.43) and (53.43,40) .. (70,40) .. controls (86.57,40) and (100,53.43) .. (100,70) .. controls (100,86.57) and (86.57,100) .. (70,100) .. controls (53.43,100) and (40,86.57) .. (40,70) -- cycle ;

\draw  [color={rgb, 255:red, 0; green, 0; blue, 0 }  ,draw opacity=1 ] (70,200) .. controls (70,183.43) and (83.43,170) .. (100,170) .. controls (116.57,170) and (130,183.43) .. (130,200) .. controls (130,216.57) and (116.57,230) .. (100,230) .. controls (83.43,230) and (70,216.57) .. (70,200) -- cycle ;

\draw  [color={rgb, 255:red, 0; green, 0; blue, 0 }  ,draw opacity=1 ] (170,120) .. controls (170,103.43) and (183.43,90) .. (200,90) .. controls (216.57,90) and (230,103.43) .. (230,120) .. controls (230,136.57) and (216.57,150) .. (200,150) .. controls (183.43,150) and (170,136.57) .. (170,120) -- cycle ;

\draw  [color={rgb, 255:red, 0; green, 0; blue, 0 }  ,draw opacity=1 ] (200,220) .. controls (200,203.43) and (213.43,190) .. (230,190) .. controls (246.57,190) and (260,203.43) .. (260,220) .. controls (260,236.57) and (246.57,250) .. (230,250) .. controls (213.43,250) and (200,236.57) .. (200,220) -- cycle ;

\draw  [fill={rgb, 255:red, 144; green, 19; blue, 254 }  ,fill opacity=0.12 ] (240.65,102.56) .. controls (228.65,71.56) and (127.65,48.56) .. (154.65,117.56) .. controls (181.65,186.56) and (119.65,145.56) .. (85.65,159.56) .. controls (51.65,173.56) and (39.65,220.56) .. (76.65,239.56) .. controls (113.65,258.56) and (187.65,259.02) .. (224.65,265.02) .. controls (261.65,271.02) and (293.65,242.02) .. (261.65,198.02) .. controls (229.65,154.02) and (252.65,133.56) .. (240.65,102.56) -- cycle ;
\draw  [fill={rgb, 255:red, 74; green, 144; blue, 226 }  ,fill opacity=0.12 ] (190,80) .. controls (228.35,81.29) and (255.35,126.29) .. (220,150) .. controls (184.65,173.71) and (172.35,202.29) .. (140,220) .. controls (107.65,237.71) and (72.35,249.29) .. (60,200) .. controls (47.65,150.71) and (69.35,135.29) .. (40,90) .. controls (10.65,44.71) and (69.35,21.29) .. (100,40) .. controls (130.65,58.71) and (151.65,78.71) .. (190,80) -- cycle ;

\draw (230,220) node  [font=\huge,color={rgb, 255:red, 0; green, 0; blue, 0 }  ,opacity=1 ] [align=left] {4};
\draw (200,120) node  [font=\huge,color={rgb, 255:red, 0; green, 0; blue, 0 }  ,opacity=1 ] [align=left] {3};
\draw (100,200) node  [font=\huge,color={rgb, 255:red, 0; green, 0; blue, 0 }  ,opacity=1 ] [align=left] {2};
\draw (70,70) node  [font=\huge] [align=left] {1};

\end{tikzpicture}}}
    \caption{3-uniform hypergraph example with $M=4$ variables and 2 triplets.}
    \label{fig:3hypergraph} 
\end{figure}

\begin{definition} {\textbf{Connected hypergraph} -- }
    A hypergraph is \emph{connected} if, for any pair of vertices $\{v1,v2\}\in\calV$, there exists a path of $P$ hyper-edges $(e_1,\ldots, e_P)\in\calE^P$ such that: $v_1\in e_1$, $v_2\in e_P$ and $e_p \cap e_{p+1} \neq \emptyset$ for $1\leq p<P$.
    In the following, such couplings are denoted as \emph{valid} couplings.
    \label{def:validCoupling}
\end{definition}
To obtain an estimate of factors $\{\Am{m}\}_{m=1}^M$ from marginals $\{\Hjklt\}_{\jkl\in\calT}$, the coupling $\calT$ must be valid.
Indeed, if $\calT$ is not connected, two cases arise for which the estimation is impossible.
First, one variable does not appear in any of the triplets.
For example, \Cref{fig:couplings:miss} shows a coupling without the variable $m=6$.
Therefore, the factors $\Am6$ cannot be estimated.
Secondly, let us consider the case of two disjoints non-empty hypergraphs $(\calV_1,\calE_1)$ and $(\calV_2,\calE_2)$ such that $\calV = \calV_1 \cup \calV_2$ and $\calE = \calE_1 \cup \calE_2$.
This case results in the estimation of two uncoupled sets of factors.
For the example of \Cref{fig:couplings:dec}, one may obtain an estimation of $\{\Am1,\Am2,\Am3\}$ and $\{\Am4,\Am5,\Am6\}$.
However, the fully-joint decomposition of the 6 factors cannot be estimated.
An example of a valid coupling is given in \Cref{fig:couplings:val} where it is easy to see that: all variables are present, and any pairs of triplets may be linked by a path of triplets.
\begin{figure}
    \centering
    \begin{subfigure}[b]{0.32\linewidth}
        \centerline{\resizebox{!}{3.3cm}{\tikzset{every picture/.style={line width=0.75pt}} 

\begin{tikzpicture}[x=0.75pt,y=0.75pt,yscale=-1,xscale=1]

\draw   (40,70) .. controls (40,53.43) and (53.43,40) .. (70,40) .. controls (86.57,40) and (100,53.43) .. (100,70) .. controls (100,86.57) and (86.57,100) .. (70,100) .. controls (53.43,100) and (40,86.57) .. (40,70) -- cycle ;

\draw  [color={rgb, 255:red, 0; green, 0; blue, 0 }  ,draw opacity=1 ] (70,200) .. controls (70,183.43) and (83.43,170) .. (100,170) .. controls (116.57,170) and (130,183.43) .. (130,200) .. controls (130,216.57) and (116.57,230) .. (100,230) .. controls (83.43,230) and (70,216.57) .. (70,200) -- cycle ;

\draw  [color={rgb, 255:red, 0; green, 0; blue, 0 }  ,draw opacity=1 ] (170,120) .. controls (170,103.43) and (183.43,90) .. (200,90) .. controls (216.57,90) and (230,103.43) .. (230,120) .. controls (230,136.57) and (216.57,150) .. (200,150) .. controls (183.43,150) and (170,136.57) .. (170,120) -- cycle ;

\draw  [color={rgb, 255:red, 0; green, 0; blue, 0 }  ,draw opacity=1 ] (200,220) .. controls (200,203.43) and (213.43,190) .. (230,190) .. controls (246.57,190) and (260,203.43) .. (260,220) .. controls (260,236.57) and (246.57,250) .. (230,250) .. controls (213.43,250) and (200,236.57) .. (200,220) -- cycle ;

\draw  [color={rgb, 255:red, 0; green, 0; blue, 0 }  ,draw opacity=1 ] (270,70) .. controls (270,53.43) and (283.43,40) .. (300,40) .. controls (316.57,40) and (330,53.43) .. (330,70) .. controls (330,86.57) and (316.57,100) .. (300,100) .. controls (283.43,100) and (270,86.57) .. (270,70) -- cycle ;

\draw  [color={rgb, 255:red, 0; green, 0; blue, 0 }  ,draw opacity=1 ] (320,170) .. controls (320,153.43) and (333.43,140) .. (350,140) .. controls (366.57,140) and (380,153.43) .. (380,170) .. controls (380,186.57) and (366.57,200) .. (350,200) .. controls (333.43,200) and (320,186.57) .. (320,170) -- cycle ;

\draw  [fill={rgb, 255:red, 74; green, 144; blue, 226 }  ,fill opacity=0.12 ] (190,80) .. controls (228.35,81.29) and (255.35,126.29) .. (220,150) .. controls (184.65,173.71) and (172.35,202.29) .. (140,220) .. controls (107.65,237.71) and (72.35,249.29) .. (60,200) .. controls (47.65,150.71) and (69.35,135.29) .. (40,90) .. controls (10.65,44.71) and (69.35,21.29) .. (100,40) .. controls (130.65,58.71) and (151.65,78.71) .. (190,80) -- cycle ;
\draw  [fill={rgb, 255:red, 245; green, 166; blue, 35 }  ,fill opacity=0.12 ] (332.65,54.02) .. controls (320.65,23.02) and (241.65,18.02) .. (268.65,87.02) .. controls (295.65,156.02) and (183.65,173.02) .. (193.65,224.02) .. controls (203.65,275.02) and (243.65,254.02) .. (271.65,241.02) .. controls (299.65,228.02) and (299.65,198.02) .. (336.65,204.02) .. controls (373.65,210.02) and (405.65,181.02) .. (373.65,137.02) .. controls (341.65,93.02) and (344.65,85.02) .. (332.65,54.02) -- cycle ;
\draw  [fill={rgb, 255:red, 144; green, 19; blue, 254 }  ,fill opacity=0.12 ] (240.65,102.56) .. controls (228.65,71.56) and (127.65,48.56) .. (154.65,117.56) .. controls (181.65,186.56) and (119.65,145.56) .. (85.65,159.56) .. controls (51.65,173.56) and (39.65,220.56) .. (76.65,239.56) .. controls (113.65,258.56) and (187.65,259.02) .. (224.65,265.02) .. controls (261.65,271.02) and (293.65,242.02) .. (261.65,198.02) .. controls (229.65,154.02) and (252.65,133.56) .. (240.65,102.56) -- cycle ;

\draw (350,170) node  [font=\huge,color={rgb, 255:red, 0; green, 0; blue, 0 }  ,opacity=1 ] [align=left] {6};
\draw (300,70) node  [font=\huge,color={rgb, 255:red, 0; green, 0; blue, 0 }  ,opacity=1 ] [align=left] {5};
\draw (230,220) node  [font=\huge,color={rgb, 255:red, 0; green, 0; blue, 0 }  ,opacity=1 ] [align=left] {4};
\draw (200,120) node  [font=\huge,color={rgb, 255:red, 0; green, 0; blue, 0 }  ,opacity=1 ] [align=left] {3};
\draw (100,200) node  [font=\huge,color={rgb, 255:red, 0; green, 0; blue, 0 }  ,opacity=1 ] [align=left] {2};
\draw (70,70) node  [font=\huge] [align=left] {1};

\end{tikzpicture}}}
        \caption{Valid coupling}
        \label{fig:couplings:val}
    \end{subfigure}
    \hfill
    \begin{subfigure}[b]{0.32\linewidth}
        \centerline{\resizebox{!}{3.3cm}{\tikzset{every picture/.style={line width=0.75pt}} 

\begin{tikzpicture}[x=0.75pt,y=0.75pt,yscale=-1,xscale=1]

\draw   (20,50) .. controls (20,33.43) and (33.43,20) .. (50,20) .. controls (66.57,20) and (80,33.43) .. (80,50) .. controls (80,66.57) and (66.57,80) .. (50,80) .. controls (33.43,80) and (20,66.57) .. (20,50) -- cycle ;

\draw  [color={rgb, 255:red, 0; green, 0; blue, 0 }  ,draw opacity=1 ] (50,180) .. controls (50,163.43) and (63.43,150) .. (80,150) .. controls (96.57,150) and (110,163.43) .. (110,180) .. controls (110,196.57) and (96.57,210) .. (80,210) .. controls (63.43,210) and (50,196.57) .. (50,180) -- cycle ;

\draw  [color={rgb, 255:red, 0; green, 0; blue, 0 }  ,draw opacity=1 ] (150,100) .. controls (150,83.43) and (163.43,70) .. (180,70) .. controls (196.57,70) and (210,83.43) .. (210,100) .. controls (210,116.57) and (196.57,130) .. (180,130) .. controls (163.43,130) and (150,116.57) .. (150,100) -- cycle ;

\draw  [color={rgb, 255:red, 0; green, 0; blue, 0 }  ,draw opacity=1 ] (180,200) .. controls (180,183.43) and (193.43,170) .. (210,170) .. controls (226.57,170) and (240,183.43) .. (240,200) .. controls (240,216.57) and (226.57,230) .. (210,230) .. controls (193.43,230) and (180,216.57) .. (180,200) -- cycle ;

\draw  [color={rgb, 255:red, 0; green, 0; blue, 0 }  ,draw opacity=1 ] (250,50) .. controls (250,33.43) and (263.43,20) .. (280,20) .. controls (296.57,20) and (310,33.43) .. (310,50) .. controls (310,66.57) and (296.57,80) .. (280,80) .. controls (263.43,80) and (250,66.57) .. (250,50) -- cycle ;

\draw  [color={rgb, 255:red, 0; green, 0; blue, 0 }  ,draw opacity=1 ] (300,150) .. controls (300,133.43) and (313.43,120) .. (330,120) .. controls (346.57,120) and (360,133.43) .. (360,150) .. controls (360,166.57) and (346.57,180) .. (330,180) .. controls (313.43,180) and (300,166.57) .. (300,150) -- cycle ;

\draw  [fill={rgb, 255:red, 74; green, 144; blue, 226 }  ,fill opacity=0.12 ] (170,60) .. controls (208.35,61.29) and (235.35,106.29) .. (200,130) .. controls (164.65,153.71) and (152.35,182.29) .. (120,200) .. controls (87.65,217.71) and (52.35,229.29) .. (40,180) .. controls (27.65,130.71) and (49.35,115.29) .. (20,70) .. controls (-9.35,24.71) and (49.35,1.29) .. (80,20) .. controls (110.65,38.71) and (131.65,58.71) .. (170,60) -- cycle ;
\draw  [fill={rgb, 255:red, 245; green, 166; blue, 35 }  ,fill opacity=0.12 ] (267.65,15.71) .. controls (300.65,4.71) and (329.65,37.71) .. (306.65,75.71) .. controls (283.65,113.71) and (245.65,161.71) .. (247.65,197.71) .. controls (249.65,233.71) and (189,261) .. (176.65,211.71) .. controls (164.3,162.43) and (174.3,143.43) .. (146.65,126.71) .. controls (119,110) and (154.65,30.71) .. (200.65,61.71) .. controls (246.65,92.71) and (234.65,26.71) .. (267.65,15.71) -- cycle ;

\draw (50,50) node  [font=\huge] [align=left] {1};
\draw (80,180) node  [font=\huge,color={rgb, 255:red, 0; green, 0; blue, 0 }  ,opacity=1 ] [align=left] {2};
\draw (180,100) node  [font=\huge,color={rgb, 255:red, 0; green, 0; blue, 0 }  ,opacity=1 ] [align=left] {3};
\draw (210,200) node  [font=\huge,color={rgb, 255:red, 0; green, 0; blue, 0 }  ,opacity=1 ] [align=left] {4};
\draw (280,50) node  [font=\huge,color={rgb, 255:red, 0; green, 0; blue, 0 }  ,opacity=1 ] [align=left] {5};
\draw (330,150) node  [font=\huge,color={rgb, 255:red, 0; green, 0; blue, 0 }  ,opacity=1 ] [align=left] {6};

\end{tikzpicture}}}
        \caption{One missing variable}
        \label{fig:couplings:miss}
    \end{subfigure}
    \hfill
    \begin{subfigure}[b]{0.32\linewidth}
        \centerline{\resizebox{!}{3.3cm}{\tikzset{every picture/.style={line width=0.75pt}} 

\begin{tikzpicture}[x=0.75pt,y=0.75pt,yscale=-1,xscale=1]

\draw   (20,50) .. controls (20,33.43) and (33.43,20) .. (50,20) .. controls (66.57,20) and (80,33.43) .. (80,50) .. controls (80,66.57) and (66.57,80) .. (50,80) .. controls (33.43,80) and (20,66.57) .. (20,50) -- cycle ;

\draw  [color={rgb, 255:red, 0; green, 0; blue, 0 }  ,draw opacity=1 ] (50,180) .. controls (50,163.43) and (63.43,150) .. (80,150) .. controls (96.57,150) and (110,163.43) .. (110,180) .. controls (110,196.57) and (96.57,210) .. (80,210) .. controls (63.43,210) and (50,196.57) .. (50,180) -- cycle ;

\draw  [color={rgb, 255:red, 0; green, 0; blue, 0 }  ,draw opacity=1 ] (150,100) .. controls (150,83.43) and (163.43,70) .. (180,70) .. controls (196.57,70) and (210,83.43) .. (210,100) .. controls (210,116.57) and (196.57,130) .. (180,130) .. controls (163.43,130) and (150,116.57) .. (150,100) -- cycle ;

\draw  [color={rgb, 255:red, 0; green, 0; blue, 0 }  ,draw opacity=1 ] (180,200) .. controls (180,183.43) and (193.43,170) .. (210,170) .. controls (226.57,170) and (240,183.43) .. (240,200) .. controls (240,216.57) and (226.57,230) .. (210,230) .. controls (193.43,230) and (180,216.57) .. (180,200) -- cycle ;

\draw  [color={rgb, 255:red, 0; green, 0; blue, 0 }  ,draw opacity=1 ] (250,50) .. controls (250,33.43) and (263.43,20) .. (280,20) .. controls (296.57,20) and (310,33.43) .. (310,50) .. controls (310,66.57) and (296.57,80) .. (280,80) .. controls (263.43,80) and (250,66.57) .. (250,50) -- cycle ;

\draw  [color={rgb, 255:red, 0; green, 0; blue, 0 }  ,draw opacity=1 ] (300,150) .. controls (300,133.43) and (313.43,120) .. (330,120) .. controls (346.57,120) and (360,133.43) .. (360,150) .. controls (360,166.57) and (346.57,180) .. (330,180) .. controls (313.43,180) and (300,166.57) .. (300,150) -- cycle ;

\draw  [fill={rgb, 255:red, 74; green, 144; blue, 226 }  ,fill opacity=0.12 ] (170,60) .. controls (208.35,61.29) and (235.35,106.29) .. (200,130) .. controls (164.65,153.71) and (152.35,182.29) .. (120,200) .. controls (87.65,217.71) and (52.35,229.29) .. (40,180) .. controls (27.65,130.71) and (49.35,115.29) .. (20,70) .. controls (-9.35,24.71) and (49.35,1.29) .. (80,20) .. controls (110.65,38.71) and (131.65,58.71) .. (170,60) -- cycle ;
\draw  [fill={rgb, 255:red, 245; green, 166; blue, 35 }  ,fill opacity=0.12 ] (312.65,34.02) .. controls (300.65,3.02) and (221.65,-1.98) .. (248.65,67.02) .. controls (275.65,136.02) and (163.65,153.02) .. (173.65,204.02) .. controls (183.65,255.02) and (223.65,234.02) .. (251.65,221.02) .. controls (279.65,208.02) and (279.65,178.02) .. (316.65,184.02) .. controls (353.65,190.02) and (385.65,161.02) .. (353.65,117.02) .. controls (321.65,73.02) and (324.65,65.02) .. (312.65,34.02) -- cycle ;

\draw (50,50) node  [font=\huge] [align=left] {1};
\draw (80,180) node  [font=\huge,color={rgb, 255:red, 0; green, 0; blue, 0 }  ,opacity=1 ] [align=left] {2};
\draw (180,100) node  [font=\huge,color={rgb, 255:red, 0; green, 0; blue, 0 }  ,opacity=1 ] [align=left] {3};
\draw (210,200) node  [font=\huge,color={rgb, 255:red, 0; green, 0; blue, 0 }  ,opacity=1 ] [align=left] {4};
\draw (280,50) node  [font=\huge,color={rgb, 255:red, 0; green, 0; blue, 0 }  ,opacity=1 ] [align=left] {5};
\draw (330,150) node  [font=\huge,color={rgb, 255:red, 0; green, 0; blue, 0 }  ,opacity=1 ] [align=left] {6};

\end{tikzpicture}}}
        \caption{Two uncoupled sets of variables}
        \label{fig:couplings:dec}
    \end{subfigure}
    \caption{Examples of couplings for $M = 6$ variables}
    \label{fig:couplings}
\end{figure}
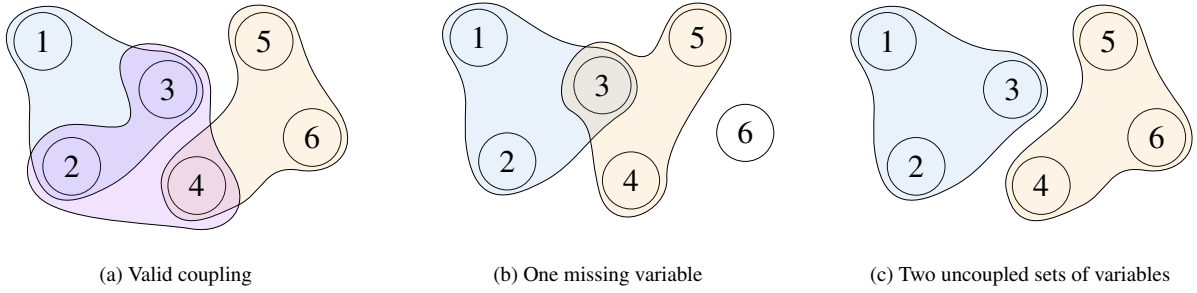

\begin{definition} {\textbf{Sequence of degrees} -- }
    For a hypergraph $\calT$, the \emph{sequence of degrees} is defined as the row vector $\bfd = \begin{bmatrix}
        d_1 & \cdots & d_M
    \end{bmatrix}\in\dsN^{1\times M}$ that contains variables occurrences inside $\calT$:
    \begin{equation*}
        d_m = \Card\left\{ \tau\in\calT \;|\; m\in\tau \right\}, \; m\in\cpdsetp{1,M}.
    \end{equation*}
\end{definition}
\begin{example}
    For the case of $\calT = \{\triples123,\triples234\}$ (see \Cref{fig:3hypergraph}), variables $1$ and $4$ are present once whereas variables 2 and 3 are represented twice.
    Then, the sequence of degrees of $\calT$ is:
    \begin{equation*}
        \bfd = \begin{bmatrix}
            1 & 2 & 2 & 1
        \end{bmatrix}.
    \end{equation*}
\end{example}
\begin{remark}
    For a valid coupling, the sequence of degrees has necessarily strictly positive elements: $d_m>0$ for all $m\in\cpdsetp{1,M}$.
\end{remark}
The sequence of degrees indicates how even the repartition of variables is across a coupling.
Intuitively, variables that appear the most are likely to be better estimated than those appearing less.
Moreover, it appears that the sequence of degrees also plays an important role in the study of the model uniqueness.
Anticipating the results of \cite{pctf3d_part2}, it appears that having $d_m=1$ in a sequence of degrees may result in defective cases, \emph{i.e.} cases where the maximal rank ensuring the recoverability of the decomposition is low.
\begin{definition} {\textbf{Step-$\alpha$ hypergraph} -- }
\label{def_step_alpha}
    For $\alpha\geq0$, a coupling $\calT$ is said \emph{step-$\alpha$} if: 
    \begin{equation*}
        (\max\limits_m d_m)-(\min\limits_m d_m) = \alpha.
    \end{equation*}
    Step-$1$ couplings are said \emph{balanced} while step-0 couplings are \emph{perfectly balanced}.
\end{definition}
For a fixed $T$, it is not always possible to find a step-$0$ coupling.
In fact, this is possible if and only if there exists an integer $\delta$ such that $3T = \delta M$ which holds if $M$ is a multiple of 3.
Note that the full coupling is always step-0, as it contains $\binom{M-1}2$ occurrences of each variable.
In \Cref{sec:balancedCoupling}, we propose an algorithm that generates step-$1$ couplings.

\subsection{A least coupling strategy: '+2' strategy}
\label{sec:leastcoupling}
First, we present a strategy with the least possible amount of triplets (in the sense of \Cref{def:validCoupling}).
The Theorem $3.9$ of \cite{boonyasombat_degree_1984} gives a necessary condition on the hypergraph connectedness which is $T\geq\lfloor M/2 \rfloor$.
Although there are several connected couplings of size $\left\lfloor\frac{M}{2}\right\rfloor$, the coupling strategy proposed here consists in adding '+2' between two triplets such that:
\begin{itemize}
    \item If $M$ is odd: \quad $\calT^{(+2)} = \{\triples123, \triples345, \ldots, \triples{M-4}{M-3}{M-2}, \triples{M-2}{M-1}M \}$,
\end{itemize}
\begin{itemize}
    \item If $M$ is even: \quad $\calT^{(+2)} = \{\triples123, \triples345, \ldots, \triples{M-3}{M-2}{M-1}, \triples{M-1}M1 \}$.
\end{itemize}
Examples of $\calT^{(+2)}$ are given in \Cref{tab:stratplus2} for small values of $M$.
\begin{table}
    \centerline{\begin{tabular}{cccc}
        \toprule
        
        $M$ & $\calT^{(+2)}$ & $\bfd$ \\ \midrule
        
        $4$ & $\{\triples123,\triples134\}$ & $\begin{bmatrix}
            2&1&2&1
        \end{bmatrix}$ \\ \\
        
        $5$ & $\{\triples123,\triples345\}$ & $\begin{bmatrix}
            1&1&2&1&1
        \end{bmatrix}$ \\ \\
        
        $6$ & $\{\triples123,\triples345,\triples156\}$ & $\begin{bmatrix}
            2&1&2&1&2&1
        \end{bmatrix}$ \\ \\
        
        $7$ & $\{\triples123,\triples345,\triples567\}$ & $\begin{bmatrix}
            1&1&2&1&2&1&1
        \end{bmatrix}$ \\ \\
        
        $8$ & $\{\triples123,\triples345,\triples567,\triples781\}$ & $\begin{bmatrix}
            2&1&2&1&2&1&2&1 
        \end{bmatrix}$ \\
        
        \bottomrule
    \end{tabular}}
    \label{tab:stratplus2}
    \caption{Examples of '+2' couplings associated with their sequence of degrees.}
\end{table}

The '+2' couplings have at most degrees equal to 2 and are by definition balanced.
Note that there exist couplings containing $T = \left\lfloor\frac{M}{2}\right\rfloor$ with more unbalanced sequences of degrees.
For example $M = 7$, $\calT = \{\triples123,\triples145,\triples167\}$ is step-($T-1$) with a sequence of degrees $\bfd = \begin{bmatrix} T & 1 & 1 & 1 & 1 & 1 & 1 \end{bmatrix}$.

\subsection{The '+1' coupling strategy}

In this section, the '+1' strategy is proposed.
This deterministic strategy generates a step-0 coupling with the least amount of triplets.
It consists in choosing $T = M$ triplets such that consecutive triplets have a 2-variables overlap.
The generated couplings are denoted as $\calT^{(+1)}$ and are obtained by adding '+1' between two consecutive triplets:
\begin{equation*}
    \calT^{(+1)} = \{\triples123,\triples234,\ldots,\triples{M-2}{M-1}M,\triples{M-1}M1,\triples{M}12\}.
\end{equation*} 
The examples featured in \Cref{tab:stratplus1} shows that '+1' couplings have constant sequence of degrees and equal to 3.
\begin{table}
    \centerline{\begin{tabular}{cccc}
        \toprule
        $M$ & $\calT^{(+1)}$ & $\bfd$ \\ \midrule
        
        $4$ & $\{\triples123,\triples234,\triples341,\triples412\}$ & $\begin{bmatrix}
            3&3&3&3
        \end{bmatrix}$ \\ \\
        
        $5$ & $\{\triples123,\triples234,\triples345,\triples451,\triples512\}$ & $\begin{bmatrix}
            3&3&3&3&3
        \end{bmatrix}$ \\ \\
        
        \multirow{2}{*}{$6$} & $\{\triples123,\triples234,\triples345,$ & \multirow{2}{*}{$\begin{bmatrix}
            3&3&3&3&3&3
        \end{bmatrix}$}\\
        & $\phantom{\{}\triples456,\triples561,\triples612\}$ & \\ \\
        
        \multirow{2}{*}{$7$} & $\{\triples123,\triples234,\triples345,\triples456,$ & \multirow{2}{*}{$\begin{bmatrix}
            3&3&3&3&3&3&3
        \end{bmatrix}$} \\
        & $\phantom{\{}\triples567,\triples671,\triples712\}$ & \\ \\
        
        \multirow{2}{*}{$8$} & $\{\triples123,\triples234,\triples345,\triples456,$ & \multirow{2}{*}{$\begin{bmatrix}
            3&3&3&3&3&3&3
        \end{bmatrix}$} \\
        & $\phantom{\{}\triples567,\triples678,\triples781,\triples812\}$ & \\
        
        \bottomrule
    \end{tabular}}
    \caption{Examples of '+1' couplings with their associated sequence of degrees.}
    \label{tab:stratplus1}
\end{table}

\subsection{Random coupling strategies}

As its name suggests, the random strategy consists in picking $T$ triplets randomly (in a uniform manner) while ensuring that the coupling is valid.
The advantage of random strategies is to control the complexity of PCTF3D, as it highly depends on the number of triplets $T$ (see \Cref{fig:BalVsRngRuntimes}).

However, random couplings have unbalanced sequence of degrees.
To apprehend this, it is possible to study $d_m$ as a random variable.
Picking $T$ triplets from the set of $\binom{M}3$ is a drawing without replacement.
And from the whole set of triplets, $\binom{M-1}2$ contains the $m$-th variable.
Therefore, at first glance, $d_m$ follows a hypergeometric distribution.
Nevertheless, the resulting set $\calT$ must be connected, which deviates $d_m$ from a hypergeometric distribution.
\Cref{fig:repart:deg} shows the empirical distributions (blue curves) of $d_m$ for different values of $T\in\cpdsetp{\lfloor\frac{M}2\rfloor,\binom{M}3}$ in the case of $M=7$ variables.
These curves were obtained after generating $1000$ random coupling yielding $7\times 1000 = 7000$ realizations of the random variable $d_m$.
They are comparable to the hypergeometric distribution (in red), with larger discrepancies arising when the number of triplets is low.
This is because hypergraph connectedness is much harder to achieve for low values of triplets.

Moreover, knowing the distribution of $d_m$ gives insights on the recoverability bounds for a value of $T$.
In particular, it appears that $d_m=1$ is a critical value which may result in defective cases, \emph{i.e.} cases where the maximal rank for having recoverability is (unexpectedly) low.
For $M=7$, when $T>21$, then $d_m>1$ empirically which leads to say that these defective cases do not appear anymore.
This will be thoroughly studied in \cite{pctf3d_part2}.
\begin{figure}
    \centering 
    \centerline{\includegraphics[height=8cm, clip, trim={50 20 70 60}]{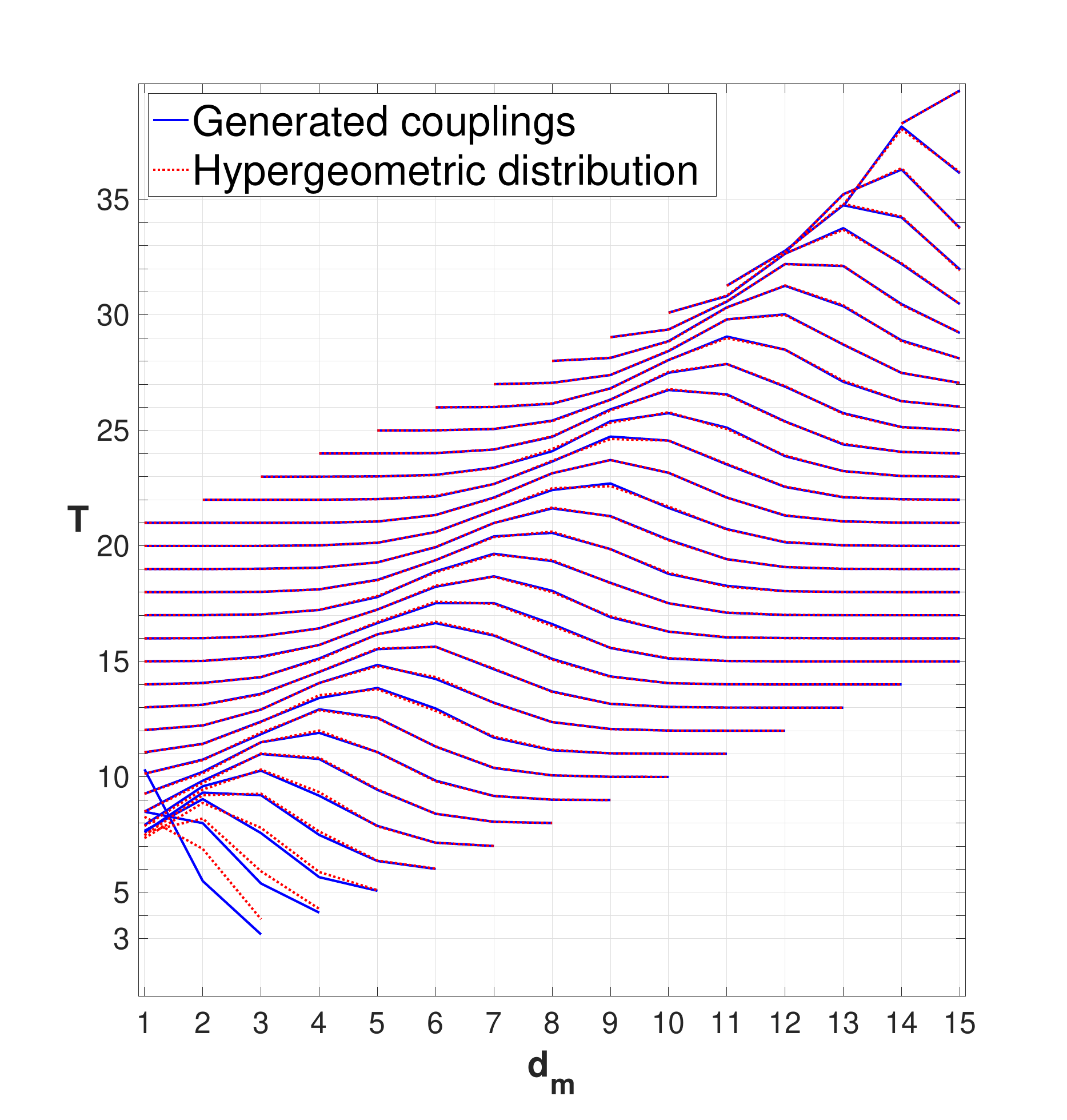}}
    \caption{ \centering Both empirical and theoretical distribution of $d_m$ for random couplings and different values of $T$ ($M=7$).}
    \label{fig:repart:deg}
\end{figure}

\subsection{Balanced couplings generation}
\label{sec:balancedCoupling}

Balanced couplings correspond to step-0 or step-1 couplings (see Definition \ref{def_step_alpha}), ensuring that all variables are represented evenly in terms of occurrences.
Such couplings are generated with the Algorithm \ref{alg:balancedcoupling} which is adapted from \cite[Section 4]{frosini_degree_2013}.
It relies on the use of Lyndon words to generate perfectly balanced couplings (hypergraphs), and can be seen as an extension of the "+1" strategy.
The reason for developing a balanced coupling strategy was motivated by two reasons.
First, numerical experiments in the next section show that balanced couplings lead to better estimations of the PMF, compared to random couplings.
Second, theoretical considerations that will be discussed in the second paper.
Indeed, unbalanced coupling may result in situations where the maximum rank for having recoverability is much lower than the one reached by other more favorable coupling.
This results in the so-called defective cases.
Adopting, a balanced coupling prevent the occurrence of such defectives cases.

Before introducing Lyndon words, recall that a triplet of variable $\{j,k,\ell\}\subset\cpdsetp{1,M}$ can also be represented as a sequence of $M$ symbols ('$0$' and '$1$').
For example, for 6 variables, the triplet $\{2,4,5\}$ may be represented as '$010110$' in the following.
\begin{definition} {\textbf{Lyndon word (LW)} -- }
    A \emph{Lyndon word} (LW) is a word of length $M$ denoted $w_1$ such that any non-trivial circular permutation $w_2$ of $w_1$ is strictly greater regarding the lexicographic order \cite{harzheim_ordered_2005} ($w_2>w_1$).
    In our framework, a LW is a binary string containing exactly 3 times the symbol '$1$'.
\end{definition}

For example, for $M=6$, there are 3 LWs which are $'000111'$, '$001011$' and '$001101$'.

Note that the strict inequality in the LW definition does not allow for periodic patterns.
This means that for $M =6$, the word '$010101$' is not considered a LW.

\subsubsection{Generation of all possible triplets from Lyndon words} 
From one LW and its circular permutations, it is possible to generate a coupling $\calT$ which is a perfectly-balanced (or step-0) with a sequence of degrees equal to 3.
\begin{example}
    Consider the LW $w = \text{'}000111\text{'}$ of length $M=6$; the hypergraph $\calT$ below has a step-0 sequence of degrees equal to 3:
    \begin{align}
            & \begin{array}{cccccc}
                    \text{'}\bm{0} & \bm{0} & \bm{0} & \bm{1} & \bm{1} & \bm{1}\text{'} \\
                    \text{'}1 & 0 & 0 & 0 & 1 & 1\text{'} \\
                    \text{'}1 & 1 & 0 & 0 & 0 & 1\text{'} \\
                    \text{'}1 & 1 & 1 & 0 & 0 & 0\text{'} \\
                    \text{'}0 & 1 & 1 & 1 & 0 & 0\text{'} \\
                    \text{'}0 & 0 & 1 & 1 & 1 & 0\text{'}
            \end{array} \quad \begin{array}{cc}
                    \calT = & \{ \triples456, \\
                    & \phantom{\{} \triples156, \\
                    & \phantom{\{} \triples126, \\
                    & \phantom{\{} \triples123, \\
                    & \phantom{\{} \triples234, \\
                    & \phantom{\{} \triples345 \} \\
            \end{array}  \\
            \bfd = & \begin{bmatrix}
                    \phantom{\text{'}}3 & 3 & 3 & 3 & 3 & 3\phantom{\text{'}}
            \end{bmatrix}.
    \end{align}
\end{example}
We now prove a property that is at the heart of the proposed algorithm for balanced coupling.
\begin{lemma}
    If two LW $(w_1,w_2)$ share a circular permutation $w$, then $w_1=w_2$.
    \label{lemma:sharingPerm}
\end{lemma}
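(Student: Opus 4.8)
The plan is to exploit the fact that ``being a circular permutation of'' is an equivalence relation on words of fixed length, so that a shared rotation forces $w_1$ and $w_2$ into the same rotation class, and then to use the strict minimality built into the Lyndon word definition to conclude that a rotation class contains at most one Lyndon word.

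First I would set up notation for the cyclic shift. Writing $\sigma$ for the rotation that moves each symbol one position (so the circular permutations of a word $u$ of length $M$ are exactly $u,\sigma(u),\ldots,\sigma^{M-1}(u)$), the hypothesis reads $w=\sigma^{a}(w_1)=\sigma^{b}(w_2)$ for some shifts $a,b$. I would then record the two structural facts I need: (i) circular permutation is symmetric, since $\sigma^{a}$ is invertible with inverse $\sigma^{M-a}$; and (ii) it is transitive, since $\sigma^{c}\sigma^{d}=\sigma^{c+d}$ with indices taken modulo $M$. Combining these, $w_2=\sigma^{a-b}(w_1)$, so $w_2$ is a circular permutation of $w_1$, and symmetrically $w_1=\sigma^{b-a}(w_2)$ is a circular permutation of $w_2$.

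Next comes the core argument, by contradiction. Suppose $w_1\neq w_2$. Then the shift $a-b$ carrying $w_1$ to $w_2$ cannot be $\equiv 0\pmod M$, so $w_2$ is a \emph{non-trivial} circular permutation of $w_1$; since $w_1$ is a Lyndon word, its definition gives $w_2>w_1$ in lexicographic order. By the identical reasoning with the roles of the two words exchanged, $w_1$ is a non-trivial circular permutation of $w_2$ and, $w_2$ being a Lyndon word, $w_1>w_2$. The two strict inequalities $w_2>w_1$ and $w_1>w_2$ are incompatible, which is the desired contradiction; hence $w_1=w_2$.

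The step I expect to need the most care is the bookkeeping around the word ``non-trivial'': I must ensure that $w_1\neq w_2$ as words genuinely forces a nonzero shift in \emph{both} directions, since a periodic word could in principle be fixed by some nonzero rotation. This is exactly where the \emph{strict} inequality in the Lyndon definition does the work — it forbids any non-trivial rotation from equalling the word itself, so every distinct rotation is truly non-trivial and each lexicographic comparison is strict. Notably, no property of the binary three-ones structure is used; the argument holds verbatim for arbitrary words over an ordered alphabet.
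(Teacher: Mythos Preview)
Your proof is correct and follows essentially the same route as the paper's: from the shared rotation you deduce that $w_1$ and $w_2$ are circular permutations of each other, and then apply the strict-minimality clause of the Lyndon word definition in both directions to obtain the contradictory inequalities $w_2>w_1$ and $w_1>w_2$. Your version is simply more explicit about the group structure of cyclic shifts and about why the relevant shifts are non-trivial, but the underlying argument is identical.
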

\begin{proof}
    If $w$ is a circular permutation of both LWs, there exists $p_1$ and $p_2$ two circular permutations such that:
    \begin{equation*}
        w = p_1(w_1) \quad \text{and} \quad w = p_2(w_2).
    \end{equation*} Hence, $p_1(w_1)=p_2(w_2)$ so there exists a circular permutation $p$ such that $w_1=p(w_2)$ and $w_2=p^{-1}(w_2)$.
    If $p$ is the identical permutation the proof is complete.
    If not, it means that $w_1$ is a circular permutation of the LW $lw_2$ therefore $w_1>lw_2$ and conversely $w_2$ is a circular permutation of the LW $w_1$ so $w_2>w_1$ which leads to a contradiction.
    Hence, $p$ is the identical permutation and the proof is complete.
\end{proof} 
The converse of Lemma \ref{lemma:sharingPerm} states that for two different LWs, the 2 sets of $M$ triplets generated by their respective circular permutations are disjoints.
For our algorithm, this property permits to guarantee that all possible triplets may be generated with LWs.

In \cite{gilbert_symmetry_1961}, the number of LWs is studied in the general case of LWs.
In our case of triplets, for a given $M\geq 3$, the number of LWs is denoted $\LW(M)$ and is given by:
\begin{equation*}
    \LW(M) = \left\{ \begin{array}{lll}
        \frac{1}{M} \binom{M}3 - \frac{1}{M} \binom{\frac{M}3}1 & \text{if} & 3\mid M \\
        \frac{1}{M} \binom{M}3 & \text{else} &
    \end{array}\right.\quad \Longrightarrow \quad \LW(M) = \left\lfloor\frac{\binom{M}3}{M}\right\rfloor.
\end{equation*}

Therefore, two cases emerge.
First, if $3\nmid M$, by invoking the converse of \Cref{lemma:sharingPerm}, LWs generate $M\binom{M}3/M$ triplets, hence all possible triplets.
However, in the case of $M$ is divided by 3, there are $M/3$ triplets that are missing from the LWs circular permutations.
It comes from the fact that Lyndon words cannot be periodic, which can only occur for $3\mid M$.
For example, the triplets '$010101$' and '$101010$' are neither LWs nor a LW permutation by definition.
Therefore, the $M/3$ remaining triplets are defined as the circular permutations of:
\begin{equation}
    \text{'}\overbrace{0\cdots01}^{\frac{M}{3}}\;\overbrace{0\cdots01}^{\frac{M}{3}}\;\overbrace{0\cdots01}^{\frac{M}{3}}\text{'}.
    \label{eq:mlmdiv3}
\end{equation}
It is now possible to build an algorithmic procedure to generate balanced couplings based on LWs.

\subsubsection{An algorithm for balanced coupling generation}

\textbf{Algorithm \ref{alg:balancedcoupling}} is adapted from \cite[Section 4]{frosini_degree_2013}; it generates a balanced coupling for given values of $M$ and $T$ such as $T \leq \binom{M}{3}$.
Indeed, writing :
$$
T = \left\lfloor \frac{T}{M} \right\rfloor M + \epsilon
$$
with $\epsilon < M$.
The algorithm relies on circular permutations of $\left\lfloor \frac{T}{M} \right\rfloor$ LWs (and the word \eqref{eq:mlmdiv3} if M is divided by 3), excluding the LW $\text{'}(0)^{M-3}(1)^3\text{'}$ which is reserved for the $\epsilon$ missing triplets built from permutation of the LWs $\text{'}(0)^{M-3}(1)^3\text{'}$.
Note that when $M/2 \leq T <M$, a had-hoc deterministic procedure derived from the '+2' strategy is used.

\begin{algorithm2e}
        \KwIn{$M$ and $T$.}
        
        \If{$T\geq M$}{
                Add the circular permutations of $\lfloor T/M\rfloor$ LWs to $\calT$,
                
                If $3\mid M$ and $\Card(\calT)\neq T$, add permutations of the word \eqref{eq:mlmdiv3} to $\calT$ (as much as needed and at most $M/3$).

                If $\Card(\calT)\neq T$, add circular permutations of '$(0)^{M-3}(1)^3$' (as much as needed but at most $M$),
            }
        
        \If{$T< M$}{
                Initialize $\calT = \calT^{(+2)}$,

                Add the remaining $T-\lfloor M/2\rfloor$ triplets so that $\calT$ is step-1.
            }

        \KwOut{$\calT$}
        \caption{Balanced hypergraph generation}
        \label{alg:balancedcoupling}
\end{algorithm2e}

\subsubsection{Examples of balanced couplings generation}

\begin{example} {\textbf{$M = 7$ and $T = 12$} -- }
    For this case, one LW is needed: for example '$0001011$'.
    After adding all permutations of this word, the sequence of degree is temporarily constant and equal to 3.
    Then, $\calT$ lacks 5 triplets which will be generated from permutations of '$0000111$':
    \begin{align*}
                & \begin{array}{ccccccc}
                        \text{'}\bm{0} & \bm{0} & \bm{0} & \bm{0} & \bm{1} & \bm{1} & \bm{1}\text{'} \\
                        \text{'}0 & 1 & 1 & 1 & 0 & 0 & 0\text{'} \\
                        \text{'}1 & 0 & 0 & 0 & 0 & 1 & 1\text{'} \\
                        \text{'}0 & 0 & 1 & 1 & 1 & 0 & 0\text{'} \\
                        \text{'}1 & 1 & 0 & 0 & 0 & 0 & 1\text{'}
                \end{array} \quad \begin{array}{cc}
                        \calT = \calT \quad \cup & \{ \triples567, \\
                        & \phantom{\{} \triples234, \\
                        & \phantom{\{} \triples167, \\
                        & \phantom{\{} \triples345, \\
                        & \phantom{\{} \triples127 \} \\
                \end{array} \\
                \bfd = \bfd + & \begin{bmatrix}
                        \phantom{\text{'}}2 & 2 & 2 & 2 & 2 & 2 & 3\phantom{\text{'}}
                \end{bmatrix}.
    \end{align*}
    Finally, the resulting coupling is 
    \begin{align*}
        \calT = \{ & \triples467,\;\triples157,\;\triples126,\;\triples237,\;\triples134,\;\triples245,\;\triples356, \\
        & \triples567,\;\triples234,\;\triples167,\;\triples345,\;\triples127 \},
    \end{align*}
    which gives a sequence of degrees equal to $\bfd = \begin{bmatrix}
        5 & 5 & 5 & 5 & 5 & 5 & 6
    \end{bmatrix}$ hence $\calT$ is step-1.
    
    Note that the permutations considered in the remaining triplets are chosen such that the coupling stays step-1, which is clearly feasible in all possible case.
\end{example}

\begin{example} {\textbf{$M=6$ and $T=9$} -- }
    Similarly, the circular permutations of '$001101$' are added to $\calT$, hence it lacks 3 triplets.
    After adding the $M/3 = 2$ permutations of '$010101$' (itself and '$101010$'), only one triplet is missing.
    Finally, '$000111$' is added to give: 
    \begin{align*}
        \calT = \{ \triples346,\;\triples145,\;\triples256,\;\triples136,\;\triples124,\;\triples235,\;\triples246,\;\triples135,\;\triples456\},
    \end{align*}
    with $\bfd = \begin{bmatrix}
        4 & 4 & 4 & 5 & 5 & 5
    \end{bmatrix}$.
\end{example}

\begin{example} {\textbf{$M=10$ and $T=8$}}
    To ensure that the coupling is valid, $\calT$ is initialized with the '+2' coupling (see \Cref{sec:couplings}):
    \begin{equation*}
        \calT^{(+2)} = \{\triples123,\; \triples345,\; \triples567,\; \triples789,\; \triples19{10}\}.
    \end{equation*}
    Then, the 3 triplets missing are added so that the coupling is at least step-1 which results in:
    \begin{equation*}
        \calT = \{\triples123,\; \triples345,\; \triples567,\; \triples789,\; \triples19{10},\; \triples246,\; \triples8{10}1,\; \triples234\},
    \end{equation*}
    which admits $\bfd = \begin{bmatrix}
        3 & 3 & 3 & 3 & 2 & 2 & 2 & 2 & 2 & 2 
    \end{bmatrix}$ as its sequence of degrees.
\end{example}
\section{Numerical experiments} \label{sec:numExp}

This section presents numerical experiments that were performed to study the performance of PCTF3D (and associated coupling strategy) to that of FCTF3D which in some respects represents the best achievable performance.

\subsection{Performance criteria}
To study the performance of the proposed methods, several criteria are proposed.
First, it must be noted that the Frobenius norm between two tensors is often impossible to compute in practice, because of the prohibitive size of tensors:
\begin{equation*}
    \Err{\text{MD}} = \left\| \bcalH-\cpdsetp{\widehat{\lbd}; \Amh1, \ldots, \Amh{M}} \right\|^2_F.
\end{equation*}

We rather consider two lower-order metrics consisting in summing the errors over lower-order marginals:
\begin{equation*}
    \Err{\text{1D}} = \sum\limits_{m=1}^M \left\|\bfh^{(m)}-\widehat{\bfh}^{(m)}\right\|^2_2,
\end{equation*}
\begin{equation*}
    \Err{\text{3D}} = \sum\limits_{\jkl\in\calT_{\text{all}}} \left\| \Hjkl-\cpdsetp{\widehat{\lbd}; \Amh{j},\Amh{k},\Amh{\ell}} \right\|^2_F,
\end{equation*}
where $\bfh^{(m)}$ represents the $m$-th 1D marginal and $\widehat{\bfh}^{(m)}$ its estimation.
Note that $\Err{\text{3D}}$ is computed over all possible triplets.
It is equivalent with the cost function of the fully coupled case denoted as FCTF3D.
For PCTF3D, the 3D marginals distance is computed even for marginals not considered in the couplings.

The Factor Match Score, defined in \cite{acar_scalable_2011}, consists in finding the best permutation such that for $r\in\cpdsetp{1,R}$ estimated factors $\amrh{m}$ can be compared to the theoretical $\amr{m}$:
\begin{equation}
    \dist_{\text{FMS}}(\bcalH,\widehat{\bcalH}) = \sum\limits_{r=1}^R \prod\limits_{m=1}^M \frac{ \bfa^{(m)\T}_r\amrh{m} }{ \|\amr{m}\|_2 \|\amrh{m}\|_2 }.
    \label{eq:fms}
\end{equation}
\subsection{Influence of the number of triplets and number of samples} 

Multivariate Gaussian distributions of rank $R=20$ are randomly generated and added with weights $\lbd$.
Then, they are discretized over $I=30$ bins per dimension to create a theoretical probability mass tensor $\bcalH$.
Then, different number of samples $N\in \left\{10^4,3\cdot10^4,10^5,3\cdot10^5,10^6\right\}$ following the GMM are generated.
The 3D marginals are computed with $I=30$ bins per dimension.
We set a maximum of $10^3$ outer iterations for the stopping criterion of Algorithm \ref{alg:PCTF3D}.
A maximum of $20$ inner iterations for Algorithms \ref{alg:upA} and \ref{alg:upL}.
The target decomposition rank is set equal to the number of mixed distributions $R=20$.
The reconstruction error $\Err{\text{1D}}$ and $\Err{\text{3D}}$ are then evaluated for each strategy and averaged over 10 experiments.
\Cref{fig:ntriplets} shows that the coupling strategy '1/8' -- where $T = \frac18\binom{M}3$ triplets are chosen randomly -- yields to performance similar to the fully coupled strategy.
This shows that partial coupling strategies are beneficial in terms of computational cost as the computational complexity is linear with the number of triplets in consideration.
\begin{table}
    \label{tab:triples}
    \centering
    \small
    \begin{tabular}{ccc}
        \toprule
        Strategy & $T$ & Triplets \\ \midrule
        +2 & 5 & $\triples123$, $\triples345$, $\triples567$, $\triples789$, $\triples9{10}1$ \\ \\
        +1 & 10 & $\triples123$, $\triples234$, $\ldots$, $\triples9{10}1$, $\triples{10}12$ \\ \\
        1/8 & $15 = \frac{120}{8}$ & random triplets \\ \\
        1/4 & $30 = \frac{120}{4}$ & random triplets \\ \\
        1/2 & $60 = \frac{120}{2}$ & random triplets \\ \\
        1 & $120 = \binom{10}{3}$ & all triplets ($\sim$ FCTF3D)\\ \bottomrule
    \end{tabular}
    \caption{Coupling strategies for $M=10$.}
\end{table}
\begin{figure}
    \centering
    \begin{subfigure}{0.45\linewidth}
        \centering
        \centerline{\centerline{\includegraphics[width=\linewidth]{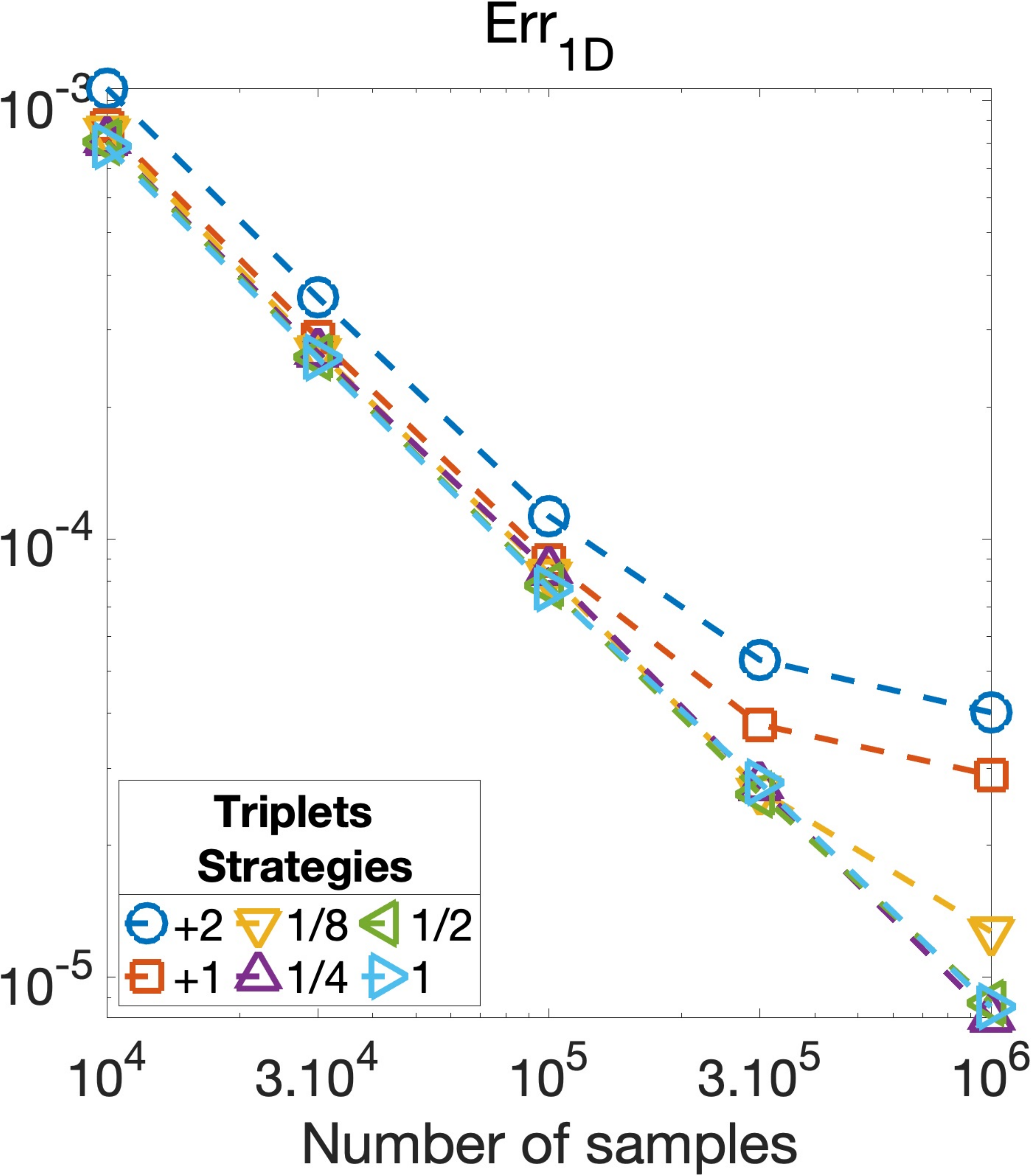}}}
        \caption{Error on 1D marginal estimation.}
        \label{eq:errmarg1d}
    \end{subfigure}
    \hfill
    \begin{subfigure}{0.45\linewidth}
        \centering
        \centerline{\centerline{\includegraphics[width=\linewidth]{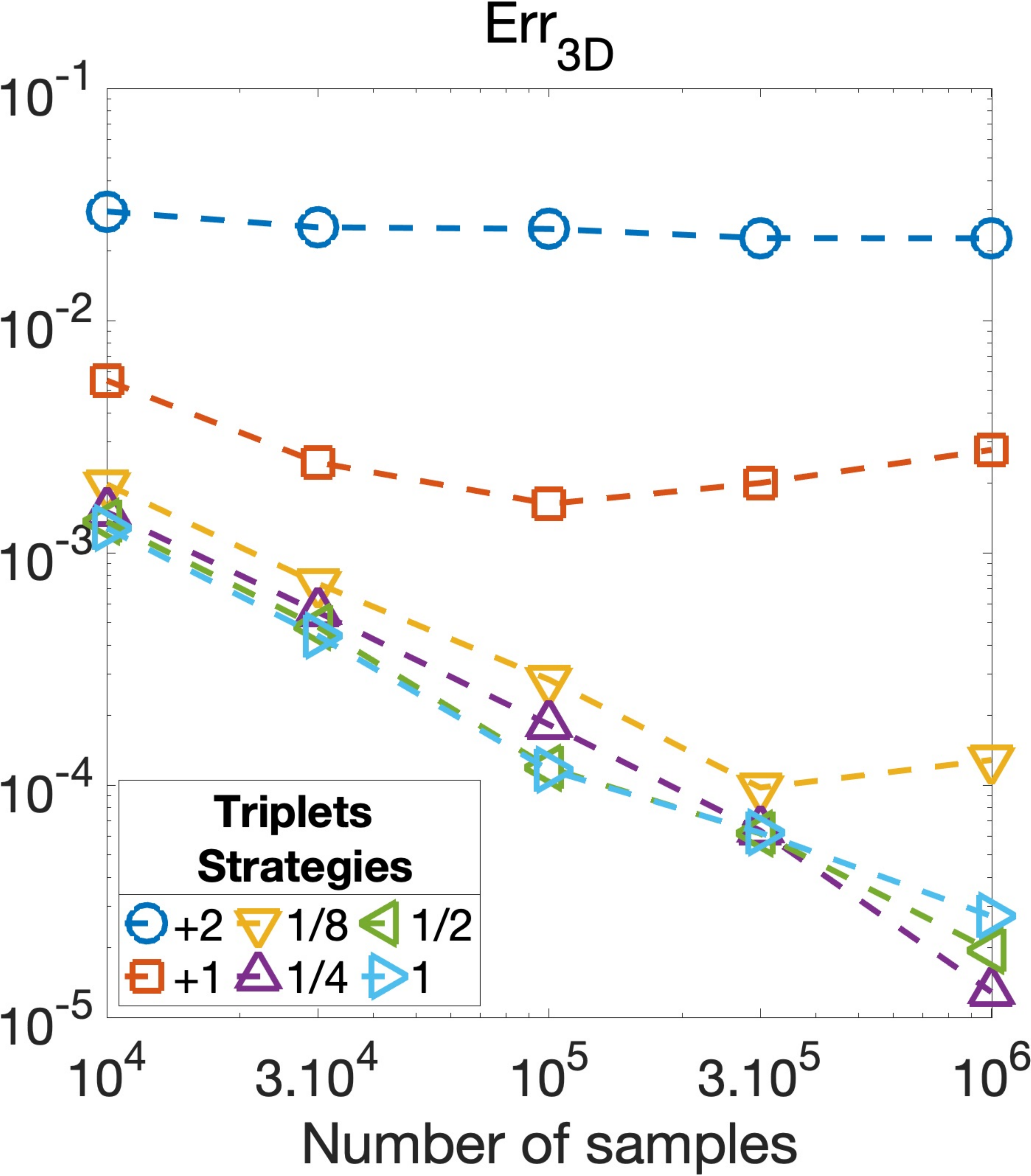}}}
        \caption{Error on 3D marginal estimation.}
        \label{eq:errmarg3d}
    \end{subfigure}
    \caption{Error regarding the number of triplets for different coupling strategies.}
    \label{fig:ntriplets}
\end{figure}

\subsection{Balanced v/s random couplings} \label{sec:rngvsbal}

To compare the random and balanced strategies, we performed a similar experiment than in the previous subsection.
However, this time uniform distributions are used.
A mixture of $R =20$ multivariate uniform distributions of dimension $M=10$ are randomly generated and discretized with $I = 15$ bins per dimension.
Then, datasets following these distributions were generated with different sample sizes:
$
    N\in\{10^4, 2.10^4, 5.10^4, 10^5, 2.10^5, 5.10^5,	10^6\}.
$
For both strategies and for each distribution, 100 couplings were generated for different number of triplets between $T=\lfloor M/2\rfloor = 5$ and $T = \binom{M}3 = 120$.

PCTF3D was then applied with $R=20$, a maximum $10^3$ outer iterations and $15$ maximum inner iterations of ADMM for each sub-problem.
The results of this experiment are shown in \Cref{fig:BalVsRng}.
Error on 3D-marginals reaches a constant level, which represents FCTF3D's performance, above a certain number of triplets.
The higher number of samples $N$, this level is achieved for fewer triplets.
First, more samples lead to a better estimation of lower-order marginals, hence reducing the final estimation errors.
Secondly, having more samples allows to reduce the number of triplets considered without losing in estimation performance.
FCTF3D's performance is achieved for both random and balanced strategies at a similar value of $T$.
The difference between those two strategies rises for lower values of $T$.
Indeed, the estimation of balanced couplings is slightly better, for both the 3D-marginals error and the FMS score.
As we will show in the second article, identifiability is stronger for balanced couplings which is a second argument in favor of balanced couplings.
Concerning PCTF3D's runtimes and complexity, Figure \ref{fig:BalVsRngRuntimes} shows runtimes for this experiment.
These computation times evolve approximately linearly with $T$ while there is nearly no complexity regarding the number of samples $N$.
Indeed, we recall that the parameter $N$ does not affect PCTF3D's computation time after the 3D marginals are computed.
More details on runtime and complexity may be found in \cite{flores_probability_2023}.
\begin{figure}
    \centerline{\includegraphics[width=.9\linewidth,clip,trim={150 45 165 10}]{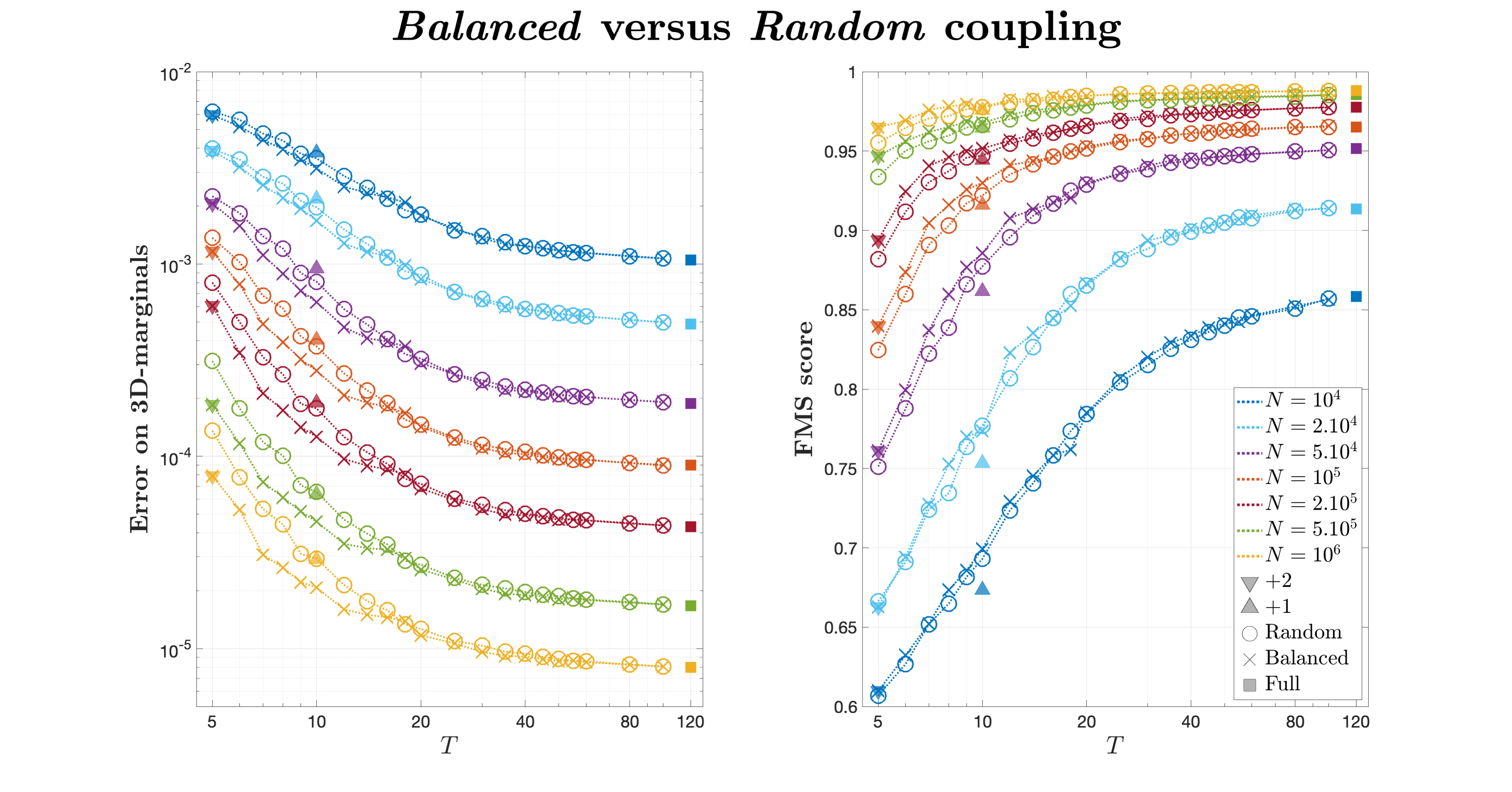}}
    \caption{Estimation performance comparison between balanced and random coupling strategies for a $10$-dimensional PMF :
    \textbf{Left}: Estimation error on 3D-marginals regarding the number of triplets $T\in[5,120]$ and the number of samples $N$ ; 
    \textbf{Right}: FMS score \eqref{eq:fms} regarding $T$ and the number of samples $N$.
    For higher number of $N$, low values of $T$ provide similar performances compared to FCTF3D ($T = 120$).
    Balanced couplings have slightly better performances compared to random couplings at low values of $T$.}
    \label{fig:BalVsRng}
\end{figure}
\begin{figure}
    \centerline{\includegraphics[width=.5\linewidth,clip,trim={40 10 80 25}]{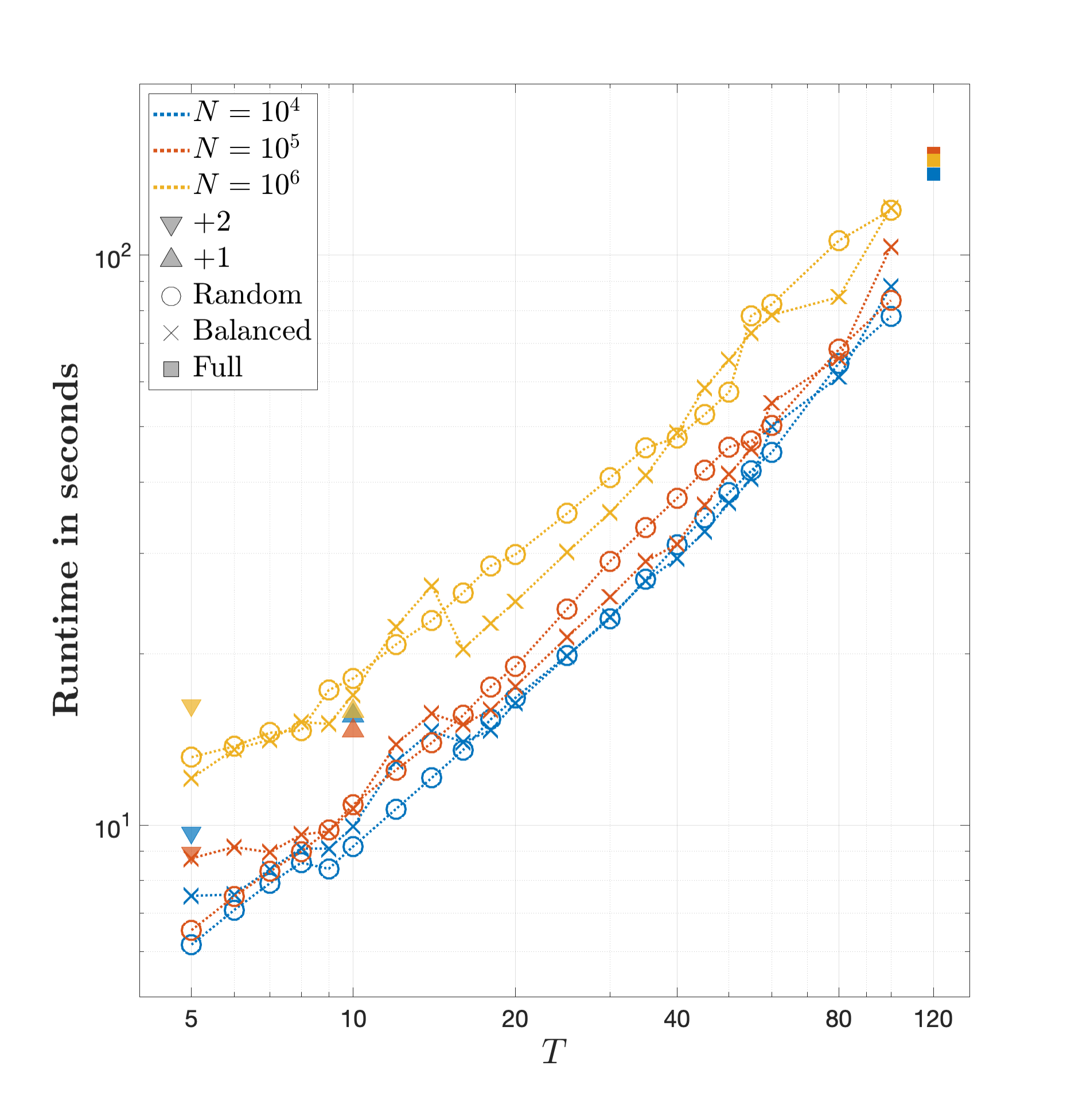}}
    \caption{Runtimes regarding the number of triplets $T$ for all coupling strategies and for different number of samples $N\in\{10^4,10^5,10^6\}$ in a case of a 10-dimensional PMF estimation problem.}
    \label{fig:BalVsRngRuntimes}
\end{figure}

\section{Application to flow cytometry data analysis} \label{sec:appli_cyto}

Flow cytometry (FCM) is a single cell analysis technique that is used in many areas including environmental studies \cite{vesey_detection_1994}, agronomy \cite{seidel_jr_sexing_2003} and oncology \cite{greve_flow_2012}.
More particularly, it has become the reference method in immunology allowing to assess the response of immune blood cells for different pathologies and their treatments \cite{daveni_myeloid-derived_2020,chattopadhyay_good_2010}.
The principle of FCM is to measure $M$ biological properties, also called fluorescences, for large number of cells $N$.
For each cell, the fluorescence values are related to its biological properties, hence permitting a characterization of cell populations regarding biological properties of interests.
In the past decades, the development of high-throughput cytometry made possible to measure more biological properties (up to 40) for more cells (up to millions) \cite{perfetto_seventeen-colour_2004,park2020omip}.

The data acquisition results in a matrix $\bfX\in\dsR^{N\times M}$ containing the $M$ biological properties of the $N$ cells and the goal of FCM data analysis is to sort and quantify cell populations of interests.
The standard method for analyzing FCM data is the manual\emph{gating} which consists in selecting cells population from bivariate point clouds until a cell population of interest is found.
However, as mentioned in \cite{lahnemann_eleven_2020}, the \emph{gating} hardly scales to large values of $M$ as it becomes more subjective and time-consuming, thus motivating the development of computational FCM.

This section aims at showing how the proposed PCTF3D can be used in the processing of FCM data.
First the processing pipeline is presented.
Then an application to the analysis of an 8-colors dataset allows to illustrate the interest of the approach.

\subsection{Processing pipeline}

The processing pipeline consists in 4 main steps :
\begin{enumerate}
    \item Choosing the number of triplets and the coupling strategy.
    \item Computing the corresponding 3D histograms.
    The choice of the number of bins per dimension should be made in accordance with the number of available sample.
    \item Estimating the $R$ factors of the NBM using PCTF3D.
    In practice, $R$ is chosen much greater than the expected number of cell populations.
    \item Performing the clustering by using the estimated factors and visualize the results.
\end{enumerate}
At this point, some comments need to be made.
First, it may be difficult to give a biological interpretation directly from the $R$ rank-1 components.
The goal of step 4 is to group the rank-1 components to obtain biologically interpretable populations.
Another interesting feature of the NBM is that it allowed to develop original and interpretable visualization tools.
All the tools are integrated in computational framework available at \url{https://github.com/philippeflores/fcm_ctflowhd}.

\subsection{Application to 8-marker datasets} \label{sec:cytoAppli:8d}
 
In this section, PCTF3D is applied to a graft sample used in the treatment of leukemia.
After pre-processing steps, the FCM dataset acquired includes $N = 1,509,790$ cells.
For each cell, $M=8$ fluorescence signal corresponding to the following markers: CD11b, CD3, PD-L1, CD33, CD14, CD34, CD15, HLA-DR, are measured.
The following experimental set-up is considered: 
\begin{itemize}
    \item Full coupling ($T = 56$ triplets) and balanced coupling ($T = 14$, '1/4' of all triplets) strategies are considered 
    \item $I=30$ bins per dimension,
    \item $R = 100$ rank-one terms,
\end{itemize}
Complete linkage hierarchical clustering \cite{milligan_ultrametric_1979} is applied to the results of PCTF3D.
This dataset was also analyzed and interpreted with manual \emph{gating} allowing to identify cell populations.

The results of \Cref{fig:cytoAppli:g129hierarc} shows the clustering of the factors obtained with the full coupling strategy.
Note that only $R=91$ components are considered, since the remaining 9 factors had a $\lambda_r$ value numerically equal to 0.
\Cref{fig:cytoAppli:g129hiercT4} shows the results for partially coupled tensor factorization.
Even if the components are not exactly the same when compared to full coupling, the resulting proportions and fluorescence properties are very similar to the one obtained in \Cref{fig:cytoAppli:g129hierarc}.
While FCTF3D ran in 22 minutes, PCTFD3's runtime was around 5 minutes, a factor 4 reduction of the computation time, the same ratio between the number of triplets of both strategies.
\begin{figure}
    \centerline{\includegraphics[width=\linewidth,clip,trim={170 35 160 40}]{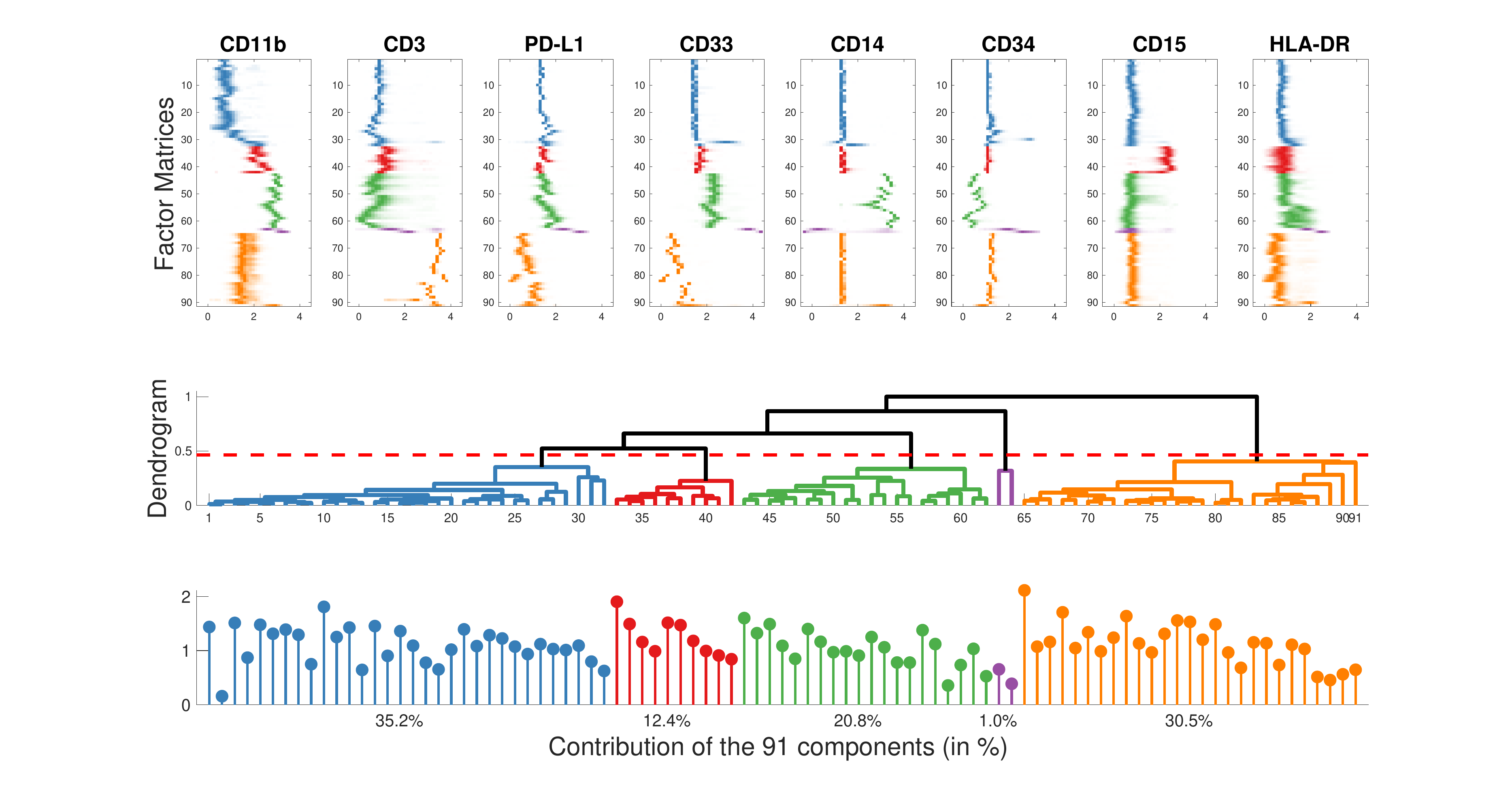}}
    \caption{PCTF3D applied to a graft sample dataset with 8 markers (full coupling).}
    \label{fig:cytoAppli:g129hierarc}
\end{figure}
\begin{figure}
    \centerline{\includegraphics[width=\linewidth,clip,trim={170 35 160 40}]{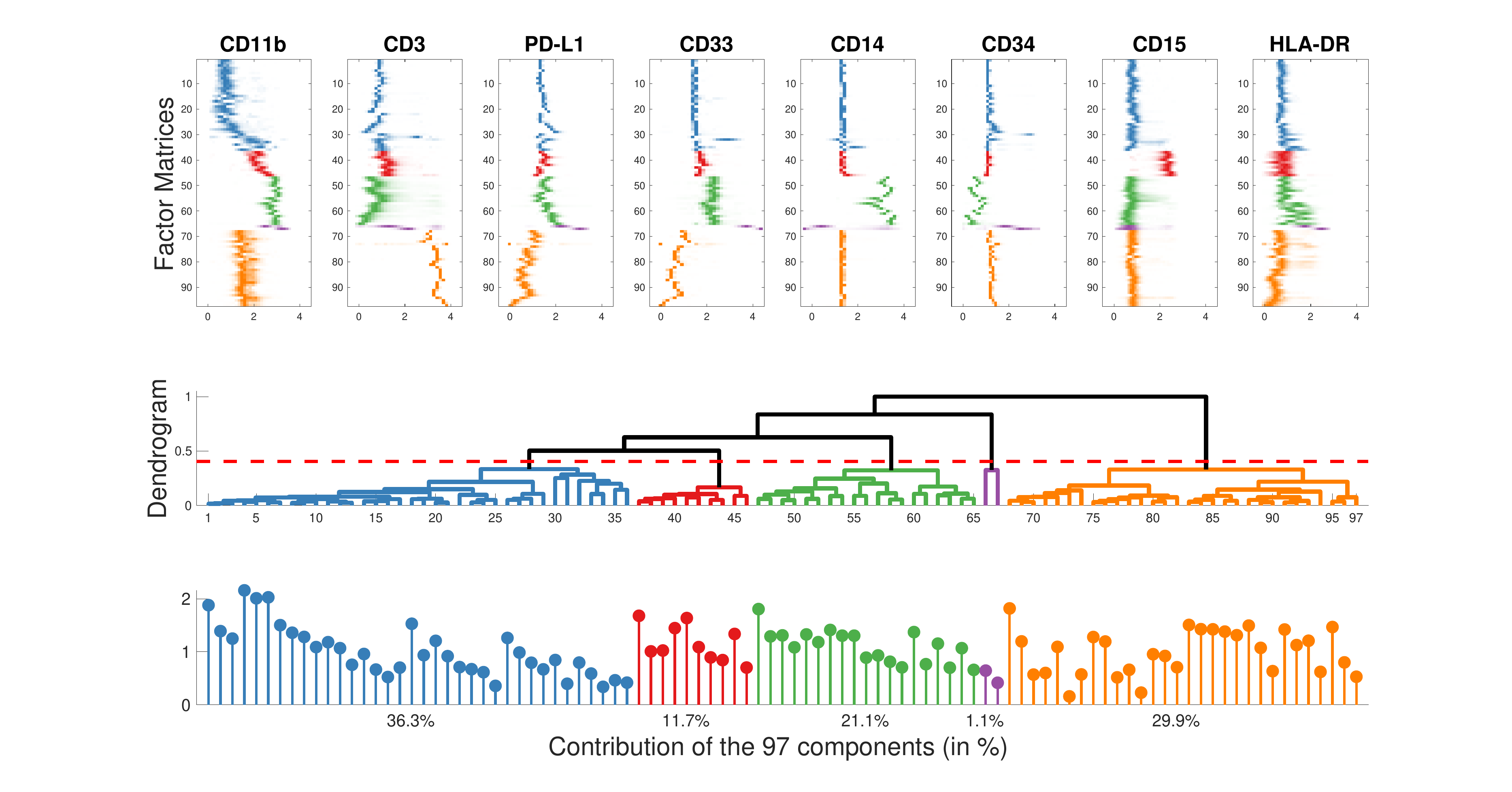}}
    \caption{PCTF3D applied to a graft sample dataset with 8 markers with a balanced coupling containing 14 triplets (one fourth of all possible triplets).}
    \label{fig:cytoAppli:g129hiercT4}
\end{figure}
\Cref{tab:cytoAppli:g129hierarc} shows that the PCTF3D and manual gating yield similar results.

\begin{table} 
    \centering
    \begin{tabular}{cccccc}
        \toprule
        \multicolumn{3}{c}{PCTF3D} & \multicolumn{3}{c}{Manual Gating} \\ \cmidrule(l{3pt}r{3pt}){1-3} \cmidrule(l{3pt}r{3pt}){4-6}
        Cluster & Fully & Balanced & Cell population & Marker expression & Proportion \\ \cmidrule(l{4pt}r{4pt}){1-3} \cmidrule(l{4pt}r{4pt}){4-6}
        Red & $12.4\%$ & $11.7\%$ & Granulocytes & CD15\textbf{+}& $10.7\%$ \\ 
        Green & $20.8\%$ & $21.1\%$ & MDSC & CD14\textbf{+}/HLA-DR\textbf{-} & $16\%$ \\ 
        Purple & $1.0\%$ & $1.1\%$ & Stem cells & CD34\textbf{+} & $0.61\%$ \\
        Orange & $30.5\%$ & $29.9\%$ & T cells & CD3\textbf{+} & $30.1\%$ \\
        Blue & $35.2\%$ & $36.3\%$ & Negative cells & - & $28.3\%$ 
        \\ \bottomrule
    \end{tabular}
    \caption{Identification of cell populations obtained with PCTF3D and manual gating (see Figures \ref{fig:cytoAppli:g129hierarc} and \ref{fig:cytoAppli:g129hiercT4}).
    MDSC stands for Myeloid-derived suppressor cells.}
    \label{tab:cytoAppli:g129hierarc}
\end{table}

\section{Conclusion}

In this work, a method for the fundamental statistical learning problem of joint PMF estimation is proposed.
This problem is compounded by the curse of dimensionality.
We here assume that the PMF follows a Naive Bayes Model: a graphical model, with both expressions in the probabilistic framework and in the context of tensor decompositions.
Benefiting the mild identifiability uniqueness of the CP decomposition, 
\cite{n_kargas_tensors_2018} proposed to perform high-dimensional joint PMF by using only 3-dimensional or 4-dimensional marginals, estimated accurately with fewer samples.
The novelty of our approach is to consider a subset of all possible 3D marginals (which can be extended to 4D marginals).
This approach was studied through the introduction of hypergraphs in the model to represent the coupling between marginals.
Choosing the said-coupling is tackled in this work, as two main strategies are proposed: random triplets and balanced couplings for which variables are evenly represented in the coupling.
On the practical side, our approach is suited for applications such as flow cytometry, as it gives interpretable results for biology experts with a non-destructive representation of a dataset.

This paper is the first part of a two-parts article.
After introducing a new coupled model in this Part I article, Part II \cite{pctf3d_part2} aims at studying uniqueness properties of PCTF3D's new model.
This problem is very important for both theoretical concerns but also for applications, as users want to ensure that results are reproducible among trials (which is surely not true for non-unique cases).
We will see that the model uniqueness highly depends on the coupling and more particularly on the structure of its hypergraph.

\section*{Acknowledgements}

The authors are very indebted to Dr. Anne-Béatrice Notarantonio for analyzing the flow cytometry datasets along with insightful expertise on the domain.

\appendix
\section{Algorithm updates} \label{app:updates}
\subsection{Update of factor matrices} 

The update of the $m$-th factor matrix $\Am{m}$ is provided thanks to \Cref{alg:upA}.
In this algorithm, the convergence criterion is the same as proposed in \cite{k_huang_flexible_2016}:
\begin{equation}
    r<\varepsilon \quad \text{and} \quad s<\varepsilon,
    \label{eq:convCrit}
\end{equation}
where
\begin{equation*}
    r = \frac{ \left\| \bfA_{t_2}-\bfB^\T_{t_2} \right\|^2_2}{\left\| \bfA_{t_2} \right\|^2_2}, 
\end{equation*}
and
\begin{equation*}
    s = \frac{ \left\| \bfA_{t_2}-\bfA_{t} \right\|^2_2}{\left\| \bfU_{t_2} \right\|^2_2}.
\end{equation*}

\begin{algorithm2e}
    \SetAlgoLined
    \KwIn{$\{\Am{m} \;|\; m\in\cpdsetp{1,M}\}$, $\;\lbd$, $\;\{\widetilde{\bcalH}^{(mk\ell)} \;|\; \triples{m}k\ell\in\calT\}$.}
    \KwSty{Initialization:}{ $\bfA=\Am{m}$, $\;\bfU\in\dsR^{I\times R}$ and $\widetilde{\bfA}\in\dsR^{R\times I}$ with zeros.}
    
      \For{ $\triples{m}k\ell\in\calT$}{
        $\bfQ^{(\ell k)} = \Am{\ell}\khatri\Am{k}$.
      }

    $\bfG = (\lbd\lbd^\T) \hadam \sum\limits_{\substack{k,\ell\\\triples{m}k\ell\in\calT}} \bfQ^{(\ell k)\T}\bfQ^{(\ell k)}$,

    $\bfW = \Diag(\lbd) \sum\limits_{\substack{k,\ell\\\triples{m}k\ell\in\calT}} \bfQ^{(\ell k)\T} \widetilde{\bcalH}^{(mk\ell)\T}_{(1)}$,

    $\rho = \frac{1}{R} \trace(\bfG)$.

    \While{convergence is not achieved}{
        $\widetilde{\bfA} \leftarrow \left(\bfG+\rho\bfI_R\right)^{-1}\left(\bfW+\rho\left(\bfA+\bfU\right)^\T\right)$,

        $\bfA \leftarrow \bfA-\widetilde{\bfA}^\T+\bfU$,

        Projection of $\bfA$ onto the simplex constraints \cite{duchi_efficient_2008},

        $\bfU \leftarrow \bfU+\bfA-\widetilde{\bfA}^{\T}$.
    }
    \KwOut{$\Am{m} = \bfA$.}
    \caption{Update of $\Am{m}$ (solving \eqref{eq:subOptimA}).}
    \label{alg:upA}
\end{algorithm2e}

\subsection{Update of the loading vector}
Similarly, the update of the loading vector $\lbd$ is given by the \Cref{alg:upL}.
The criterion is still \eqref{eq:convCrit} but the values of $r$ and $s$ changes:
\begin{equation*}
    r = \frac{ \left\| \bm\nu-\tilde{\bm\nu} \right\|^2_2}{\left\| \bm\nu \right\|^2_2} \quad \text{and} \quad
    s = \frac{ \left\| \bm\nu-\lbd \right\|^2_2}{\left\| \bfu \right\|^2_2}.
\end{equation*}
\begin{algorithm2e}
    \SetAlgoLined
    \KwIn{$\{\Am{m} \;|\; m\in\cpdsetp{1,M}\}$, $\;\lbd$, $\;\{\Hjklt \;|\; \jkl\in\calT \}$.}
    \KwSty{Initialization:}{ $\bm\nu=\lbd$, $\;\bfu\in\dsR^{R}$ and $\;\tilde{\bm\nu}\in\dsR^{R}$ with zeros.}
    
    \For{ $\jkl\in\calT$}{
        $\bfQ^{(\ell kj)} = \Am{\ell}\khatri\Am{k}\khatri \Am{j}$.
    }

    $\bfG = \sum\limits_{\jkl\in\calT} \bfQ^{(\ell kj)\T}\bfQ^{(\ell kj)}$,

    $\bfw = \sum\limits_{\jkl\in\calT} \bfQ^{(\ell kj)\T} \vectorize(\Hjklt)$, 
 
    $\rho = \frac{1}{R} \trace(\bfG)$,

    \While{convergence is not achieved}{
        $\tilde{\bm\nu} \leftarrow \left(\bfG+\rho\bfI_R\right)^{-1}\left(\bfw+\rho(\bm\nu+\bfu)\right)$,

        $\bm\nu \leftarrow \bm\nu-\tilde{\bm\nu}+\bfu$,

        Projection of $\bm\nu$ onto the simplex constraints \cite{duchi_efficient_2008},

        $\bfu \leftarrow \bfu+\bm\nu-\tilde{\bm\nu}$.
    }
    \KwOut{$\lbd = \bm\nu$.}
    \caption{$\lbd$ update (solving \eqref{eq:subOptimL}).}
    \label{alg:upL}
\end{algorithm2e}

\bibliographystyle{elsarticle-num}

\begin{thebibliography}{}
\expandafter\ifx\csname url\endcsname\relax
  \def\url#1{\texttt{#1}}\fi
\expandafter\ifx\csname urlprefix\endcsname\relax\def\urlprefix{URL }\fi
\expandafter\ifx\csname href\endcsname\relax
  \def\href#1#2{#2} \def\path#1{#1}\fi

\end{thebibliography}


\begin{thebibliography}{10}
\expandafter\ifx\csname url\endcsname\relax
  \def\url#1{\texttt{#1}}\fi
\expandafter\ifx\csname urlprefix\endcsname\relax\def\urlprefix{URL }\fi
\expandafter\ifx\csname href\endcsname\relax
  \def\href#1#2{#2} \def\path#1{#1}\fi

\bibitem{duda_pattern_1973}
R.~O. Duda, P.~E. Hart, Pattern classification and scene analysis, Vol.~3, Wiley New York, 1973.

\bibitem{van_trees_detection_2004}
H.~L. Van~Trees, Detection, estimation, and modulation theory, part {I}: detection, estimation, and linear modulation theory, John Wiley \& Sons, 2004.

\bibitem{n_kargas_tensors_2018}
{N. Kargas}, {N. D. Sidiropoulos}, {X. Fu}, Tensors, {Learning}, and “{Kolmogorov} {Extension}” for {Finite}-{Alphabet} {Random} {Vectors}, IEEE Transactions on Signal Processing 66~(18) (2018) 4854--4868, number: 18.
\newblock \href {https://doi.org/10.1109/TSP.2018.2862383} {\path{doi:10.1109/TSP.2018.2862383}}.

\bibitem{kanatsoulis_hyperspectral_2018}
C.~I. Kanatsoulis, X.~Fu, N.~D. Sidiropoulos, W.-K. Ma, Hyperspectral super-resolution: A coupled tensor factorization approach, IEEE Transactions on Signal Processing 66~(24) (2018) 6503--6517.

\bibitem{prevost_hyperspectral_2022}
C.~Pr{\'e}vost, R.~A. Borsoi, K.~Usevich, D.~Brie, J.~C. Bermudez, C.~Richard, Hyperspectral super-resolution accounting for spectral variability: Coupled tensor ll1-based recovery and blind unmixing of the unknown super-resolution image, SIAM Journal on Imaging Sciences 15~(1) (2022) 110--138.

\bibitem{borsoi:hal-04135339}
R.~A. Borsoi, I.~Lehmann, M.~A. B.~S. Akhonda, V.~Calhoun, K.~Usevich, D.~Brie, T.~Adali, \href{https://hal.science/hal-04135339}{{Coupled CP tensor decomposition with shared and distinct components for multi-task Fmri data fusion}}, in: {IEEE International Conference on Acoustics, Speech and Signal Processing (ICASSP)}, Rhodes Island, Greece, 2023, pp. 1--5.
\newline\urlprefix\url{https://hal.science/hal-04135339}

\bibitem{ermics2015link}
B.~Ermi{\c{s}}, E.~Acar, A.~T. Cemgil, Link prediction in heterogeneous data via generalized coupled tensor factorization, Data Mining and Knowledge Discovery 29 (2015).

\bibitem{papalexakis2016tensors}
E.~E. Papalexakis, C.~Faloutsos, N.~D. Sidiropoulos, Tensors for data mining and data fusion: Models, applications, and scalable algorithms, ACM Transactions on Intelligent Systems and Technology (TIST) 8~(2) (2016) 1--44.

\bibitem{ibrahim_recovering_2021}
S.~Ibrahim, X.~Fu, Recovering joint probability of discrete random variables from pairwise marginals, IEEE Transactions on Signal Processing 69 (2021) 4116--4131, iEEE.

\bibitem{a_yeredor_maximum_2019}
{A. Yeredor}, {M. Haardt}, Maximum {Likelihood} {Estimation} of a {Low}-{Rank} {Probability} {Mass} {Tensor} {From} {Partial} {Observations}, IEEE Signal Processing Letters 26~(10) (2019) 1551--1555, number: 10.
\newblock \href {https://doi.org/10.1109/LSP.2019.2938663} {\path{doi:10.1109/LSP.2019.2938663}}.

\bibitem{kargas2019learning}
N.~Kargas, N.~D. Sidiropoulos, Learning mixtures of smooth product distributions: Identifiability and algorithm, in: The 22nd International Conference on Artificial Intelligence and Statistics, PMLR, 2019, pp. 388--396.

\bibitem{reynolds_gaussian_2009}
D.~A. Reynolds, Gaussian mixture models., Encyclopedia of biometrics 741~(659-663), berlin, Springer (2009).

\bibitem{hsu_learning_2013}
D.~Hsu, S.~M. Kakade, Learning mixtures of spherical gaussians: moment methods and spectral decompositions, in: Proceedings of the 4th conference on {Innovations} in {Theoretical} {Computer} {Science}, 2013, pp. 11--20.

\bibitem{nowak_distributed_2003}
R.~D. Nowak, Distributed {EM} algorithms for density estimation and clustering in sensor networks, IEEE transactions on signal processing 51~(8) (2003) 2245--2253, iEEE.

\bibitem{gyllenberg_non-uniqueness_1994}
M.~Gyllenberg, T.~Koski, E.~Reilink, M.~Verlaan, Non-uniqueness in probabilistic numerical identification of bacteria, Journal of Applied Probability 31~(2) (1994) 542--548, cambridge University Press.

\bibitem{nowicki_estimation_2001}
K.~Nowicki, T.~A.~B. Snijders, Estimation and prediction for stochastic blockstructures, Journal of the American statistical association 96~(455) (2001) 1077--1087, taylor \& Francis.

\bibitem{pctf3d_part2}
P.~Flores, K.~Usevich, D.~Brie, Coupled tensor models for probability mass function estimation: {Part} {II}, {Uniqueness} of the model., Submitted to Signal processing (jul 2025).

\bibitem{kolda_tensor_2009}
T.~Kolda, B.~Bader, Tensor {Decompositions} and {Applications}, SIAM Review 51 (2009) 455--500.
\newblock \href {https://doi.org/10.1137/07070111X} {\path{doi:10.1137/07070111X}}.

\bibitem{comon_tensors_2014}
P.~Comon, Tensors: a {Brief} {Introduction}, Signal Processing Magazine, IEEE 31 (2014) 44--53.
\newblock \href {https://doi.org/10.1109/MSP.2014.2298533} {\path{doi:10.1109/MSP.2014.2298533}}.

\bibitem{hitchcock_expression_1927}
F.~L. Hitchcock, The expression of a tensor or a polyadic as a sum of products, Journal of Mathematics and Physics 6~(1-4) (1927) 164--189, wiley Online Library.

\bibitem{bellman_adaptive_1959}
R.~Bellman, R.~Kalaba, On adaptive control processes, IRE Transactions on Automatic Control 4~(2) (1959) 1--9, iEEE.

\bibitem{jordan_graphical_2004}
M.~I. Jordan, {Graphical Models}, Statistical Science 19~(1) (2004) 140 -- 155.

\bibitem{ishteva_tensors_2015}
M.~Ishteva, Tensors and latent variable models, in: Latent Variable Analysis and Signal Separation: 12th International Conference, LVA/ICA 2015, Liberec, Czech Republic, August 25-28, 2015, Proceedings 12, Springer, 2015, pp. 49--55.

\bibitem{robeva_duality_2019}
E.~Robeva, A.~Seigal, Duality of graphical models and tensor networks, Information and Inference: A Journal of the IMA 8~(2) (2019) 273--288, oxford University Press.

\bibitem{k_huang_flexible_2016}
{K. Huang}, {N. D. Sidiropoulos}, {A. P. Liavas}, A {Flexible} and {Efficient} {Algorithmic} {Framework} for {Constrained} {Matrix} and {Tensor} {Factorization}, IEEE Transactions on Signal Processing 64~(19) (2016) 5052--5065, number: 19.
\newblock \href {https://doi.org/10.1109/TSP.2016.2576427} {\path{doi:10.1109/TSP.2016.2576427}}.

\bibitem{boyd_distributed_2011}
S.~Boyd, N.~Parikh, E.~Chu, B.~Peleato, J.~Eckstein, Distributed optimization and statistical learning via the alternating direction method of multipliers, Foundations and Trends® in Machine learning 3~(1) (2011) 1--122, now Publishers, Inc.

\bibitem{frosini_degree_2013}
A.~Frosini, C.~Picouleau, S.~Rinaldi, On the degree sequences of uniform hypergraphs, in: Discrete {Geometry} for {Computer} {Imagery}: 17th {IAPR} {International} {Conference}, {DGCI} 2013, {Seville}, {Spain}, {March} 20-22, 2013. {Proceedings} 17, Springer, 2013, pp. 300--310.

\bibitem{boonyasombat_degree_1984}
V.~Boonyasombat, Degree sequences of connected hypergraphs and hypertrees, in: Graph {Theory} {Singapore} 1983, Vol. 1073, Springer Berlin Heidelberg, Berlin, Heidelberg, 1984, pp. 236--247, series Title: Lecture Notes in Mathematics.

\bibitem{harzheim_ordered_2005}
E.~Harzheim, Ordered sets, Vol.~7, Springer Science \& Business Media, 2005.

\bibitem{gilbert_symmetry_1961}
E.~N. Gilbert, J.~Riordan, Symmetry types of periodic sequences, Illinois Journal of Mathematics 5~(4) (1961) 657--665, duke University Press.

\bibitem{acar_scalable_2011}
E.~Acar, D.~M. Dunlavy, T.~G. Kolda, M.~Mørup, Scalable tensor factorizations for incomplete data, Chemometrics and Intelligent Laboratory Systems 106~(1) (2011) 41--56, elsevier.

\bibitem{flores_probability_2023}
P.~Flores, J.~K. Chege, K.~Usevich, M.~Haardt, A.~Yeredor, D.~Brie, Probability mass function estimation approaches with application to flow cytometry data analysis, in: 2023 IEEE 9th International Workshop on Computational Advances in Multi-Sensor Adaptive Processing (CAMSAP), IEEE, 2023, pp. 451--455.

\bibitem{vesey_detection_1994}
G.~Vesey, J.~Narai, N.~Ashbolt, K.~Williams, D.~Veal, Detection of specific microorganisms in environmental samples using flow cytometry, in: Methods in cell biology, Vol.~42, Elsevier, 1994, pp. 489--522.

\bibitem{seidel_jr_sexing_2003}
G.~E. Seidel~Jr, Sexing mammalian sperm—intertwining of commerce, technology, and biology, Animal Reproduction Science 79~(3-4) (2003) 145--156, elsevier.

\bibitem{greve_flow_2012}
B.~Greve, R.~Kelsch, K.~Spaniol, H.~T. Eich, M.~Götte, Flow cytometry in cancer stem cell analysis and separation, Cytometry Part A 81~(4) (2012) 284--293, wiley Online Library.

\bibitem{daveni_myeloid-derived_2020}
M.~d'Aveni, A.~B. Notarantonio, A.~Bertrand, L.~Boulangé, C.~Pochon, M.~T. Rubio, Myeloid-derived suppressor cells in the context of allogeneic hematopoietic stem cell transplantation, Frontiers in immunology 11 (2020) 989, frontiers Media SA.

\bibitem{chattopadhyay_good_2010}
P.~K. Chattopadhyay, M.~Roederer, Good cell, bad cell: {Flow} cytometry reveals {T}-cell subsets important in {HIV} disease, Cytometry part A 77~(7) (2010) 614--622, wiley Online Library.

\bibitem{perfetto_seventeen-colour_2004}
S.~P. Perfetto, P.~K. Chattopadhyay, M.~Roederer, Seventeen-colour flow cytometry: unravelling the immune system, Nature Reviews Immunology 4~(8) (2004) 648--655, nature Publishing Group UK London.

\bibitem{park2020omip}
L.~M. Park, J.~Lannigan, M.~C. Jaimes, Omip-069: forty-color full spectrum flow cytometry panel for deep immunophenotyping of major cell subsets in human peripheral blood, Cytometry Part A 97~(10) (2020) 1044--1051.

\bibitem{lahnemann_eleven_2020}
D.~Lähnemann~et al., Eleven grand challenges in single-cell data science, Genome Biology 21~(1) (2020) 31.
\newblock \href {https://doi.org/10.1186/s13059-020-1926-6} {\path{doi:10.1186/s13059-020-1926-6}}.

\bibitem{milligan_ultrametric_1979}
G.~W. Milligan, Ultrametric hierarchical clustering algorithms, Psychometrika 44~(3) (1979) 343--346, springer.

\bibitem{duchi_efficient_2008}
J.~Duchi, S.~Shalev-Shwartz, Y.~Singer, T.~Chandra, Efficient projections onto the $\ell$1-ball for learning in high dimensions, Proceedings of the 25th International Conference on Machine Learning (2008) 272--279\href {https://doi.org/10.1145/1390156.1390191} {\path{doi:10.1145/1390156.1390191}}.

\end{thebibliography}

\end{document}